\def\widebar{\accentset{{\cc@style\underline{\mskip10mu}}}}
\def\Widebar{\accentset{{\cc@style\underline{\mskip13mu}}}}
\newtheorem{theorem}{Theorem}
\newtheorem{definition}{Definition}
\newtheorem{proposition}{Proposition}
\newtheorem{example}{Example}
\definecolor{shadecolor}{rgb}{0.67, 0.9, 0.93}
\definecolor{aliceblue}{rgb}{0.94, 0.97, 1.0}
\definecolor{blizzardblue}{rgb}{0.67, 0.9, 0.93}
\definecolor{antiquebrass}{rgb}{0.8, 0.58, 0.46}
\definecolor{beaublue}{rgb}{0.74, 0.83, 0.9}
\definecolor{deepskyblue}{rgb}{0.0, 0.75, 1.0}
\begin{document}

\captionsetup[figure]{labelfont={ }, name={Fig.}, labelsep=period} 
\pagestyle{empty}

\title{Optimal Power Control for Transmitting Correlated Sources with  Energy Harvesting Constraints }

\author{Yunquan~Dong,~\IEEEmembership{Member,~IEEE},~ Zhi~Chen,~\IEEEmembership{Member,~IEEE},~\\
        Jian Wang,~\IEEEmembership{Student Member,~IEEE},~and~Byonghyo~Shim,~\IEEEmembership{Senior Member,~IEEE}

        \thanks{ Y. Dong is with the School of Electronic and Information Engineering,  Nanjing University of Information Science and Technology, Nanjing 210044, China. (e-mail: yunquandong@nuist.edu.cn).

        Z. Chen is with the Department of Electronic and Computer Engineering, University of Waterloo, Waterloo, ON N2L3G1, Canada (e-mail: chenzhi2223@gmail.com).

                        J. Wang and B. Shim are with the Institute of New Media and Communications and the Department  of Electrical and Computer Engineering, Seoul National University, Seoul 151744, Korea (e-mail: jwang@islab.snu.ac.kr; bshim@snu.ac.kr).

        This work is supported by the National Research Foundation of Korea (NRF) grant funded by the Korea government (MSIP 2014R1A51011478).
        }
        }

\date{}
\maketitle
\thispagestyle{empty}

\vspace{-15mm}
\begin{abstract}
We investigate the weighted-sum distortion minimization problem in transmitting two correlated Gaussian sources over Gaussian channels using two energy harvesting nodes.
    To this end, we develop offline and online power control policies to optimize the transmit power of the two nodes.
In the offline case, we cast the problem as a convex optimization and investigate the structure of the optimal solution.
    We also develop a  generalized water-filling based power allocation algorithm to obtain the optimal solution efficiently.
For the online case, we quantify the distortion of the system using  a cost function  and show that the expected cost equals the expected weighted-sum distortion.
    Based on Banach's fixed point theorem, we further propose a geometrically converging algorithm to find the minimum cost via simple iterations.
Simulation results show that our online power control  outperforms the greedy power control where each node uses all the available energy in each slot and  performs close to that of the proposed offline power control.
Moreover, the performance of our offline power control almost coincides with the performance limit of the system.
\end{abstract}

\begin{keywords}
Energy harvesting, correlated sources, distortion minimization, online power allocation.
\end{keywords}

\section{Introduction}
In energy harvesting networks, each node  continually acquires energy from nature or man-made phenomenon \cite{Ulukus-2015-review}.
     This feature has made energy harvesting a key technology to prolong the life-time of wireless sensor networks (WSNs), where sensors are often deployed in some unreachable areas~\cite{WSN-2011-survey}.
For energy harvesting powered WSNs, however, formidable challenges still remain since the energy arrivals of each node are often sporadic and irregular.
    To address this issue, many energy scheduling schemes optimizing the information transmission of energy harvesting communication systems have been suggested in recent years~\cite{Ulukus-2015-review, WSN-2011-survey, Ulukus-2011-policy, Ulukus-2012-packet_mac, RuiZhang-2012-tsp, Ydong-2015-JSAC, Zeng-2015-TWC, Badiei-2014-TIT, Sakulkar-2016-Arxiv, Dong-2016-ICCC, Deniz-2015, Knorn-2015-TSP, Nourian-2015-JSAC, Bhat-2016-ICC, Ozcelik-2016-ISIT, NIT-2011, Yzhao-2015-TWC, ZChen-2016-JSAC,Cui-2014-TWC, Dongin-TWC-2016}.

First, if the harvesting process is fully predictable (i.e., known non-causally at transmitter), the harvested energy can be scheduled in an \textit{offline} manner~\cite{Ulukus-2011-policy, Ulukus-2012-packet_mac, RuiZhang-2012-tsp}.
    In this scenario,  the energy scheduling for the  transmission process turns to be deterministic and thus can be solved before the transmission actually happens.
Second, if the energy harvesting process cannot be well predicted, \textit{online} energy scheduling is required, in which each node adjusts its transmit power  based on previous and current energy states in real-time~\cite{Ydong-2015-JSAC, Zeng-2015-TWC, Badiei-2014-TIT,Sakulkar-2016-Arxiv, YDong-2016-JSAC, Dongin-TWC-2016}.
    This online energy scheduling has been often modeled as Markov Decision Processes (MDP) and solved by Dynamic Programming (DP) \cite{Sakulkar-2016-Arxiv, Badiei-2014-TIT, Zeng-2015-TWC}.
There have also been some works considering both offline and online policies, e.g.,~\cite{RuiZhang-2012-tsp, ZChen-2016-JSAC, Cui-2014-TWC}.
    As is expected in these works, offline energy scheduling policies always outperform their online counterparts owing to the non-causal information on the energy harvesting process at transmitters.
{In a nutshell,  the offline power control scheme requires non-causal information about the energy harvesting process and thus outperforms the online power control scheme but is less practical.
    Whereas, online power control schemes are more practical and may approach the performance of offline schemes, but solving the optimal schemes using the MDP model is generally difficult and even intractable in some scenarios.}

In WSNs,  collected information (e.g., temperature, humidity, pollution density)  is usually continuous and can be compressed before being transmitted to the fusion center.
    Since the compression process  inevitably introduces some distortion to these information, it is of importance  to schedule the harvested energy so that the distortion caused by the recovering process can be minimized.
To this end, both offline and online power control policies (scheduling energy by controlling transmit powers of nodes) have been widely used~\cite{Nourian-2015-JSAC, Knorn-2015-TSP, Bhat-2016-ICC, Ozcelik-2016-ISIT,  Yzhao-2015-TWC}.
    { In fact, both the reliability and the efficiency of extracting  information from the recovered samples are dominated by the distortion of the network.}
    In these works, the fusion center tries to recover the uncoded signals using mean-squared error (MSE) estimators~\cite{Ozcelik-2016-ISIT, Yzhao-2015-TWC} or  best linear unbiased estimators (BLUE)~\cite{Nourian-2015-JSAC, Knorn-2015-TSP}.
 {Moreover, since the environmental information collected by adjacent nodes is highly correlated with each other, the energy efficiency of the network can be increased by removing the redundancy among these samples using distributed lossy source coding, i.e., the rate-distortion theory for multi-source networks}~\cite{NIT-2011}.
     {However, the task to characterize the rate-distortion region and formulate the corresponding distortion minimization problem is very difficult.}
In fact, previous studies focused only on the problem with a tractable static setting where the channels  between the two nodes and the fusion center are symmetric \cite{Deniz-2015} or using the offline power control for the non-static case~\cite{Dong-2016-ICCC}.
     {Hence, achieving the information theoretic performance limit of more general networks using more practical online power control remains an  open problem.}

 {In this paper, we study both offline and online power control policies minimizing the information theoretic distortion of the system, where two correlated sources are transmitted over non-symmetric Gaussian channels using energy harvesting nodes.
    We first minimize the weighted-sum distortion over a finite period via optimal offline power control.
We then consider the online case and investigate the optimal power control minimizing the expected weighted-sum distortion.
    In particular, we propose a cost function to quantify the distortion of the system, which is proved to be equal to the expected weighted-sum distortion.
Based on Banach's fixed point theorem, we further present an algorithm approaching the minimum expected distortion via simple iterations. }
    The main contributions of the paper can be summarized as follows:
\begin{itemize}
  \item We present the structure of optimal offline power allocations.
                We show that the energy buffer should be depleted if the averaged harvested energy  in future slots is larger than that of previous slots.
            Moreover, the transmit power of a node should be increased after the slots in which its energy buffer is depleted and should be decreased after the slots in which the energy buffer of the other node is depleted.
   \item We propose an iterative algorithm to solve the optimal offline power allocation.
                By optimizing the transmit power of each node separately and running the single user optimization iteratively, the algorithm converges to the optimal solution in a small number of iterations.
  \item We propose a cost function for online power control.
                Since it is hard to analyze the expected weighted-sum distortion directly, we quantify the system distortion using a cost function, which is defined as the weighted sum of current distortion and expected future distortion.\footnote{It should be noted that the proposed cost function model is different from the discounted cost model or the average cost model in traditional MDP theory \cite{Pturmsn-2014-MDP}.}
            We further prove that the expected cost equals the expected weighted-sum distortion.
  \item We prove that the minimum expected cost is the fixed point of some mapping.
                We then present an algorithm approaching the optimal cost using simple iterations.
\end{itemize}

This paper is organized as follows.
   Section~\ref{sec:2_model} presents the network model and the distortion minimization problem.
In Section~\ref{sec:3offline}, we present the structure of the optimal offline power allocation and propose an algorithm to obtain  the solution efficiently.
    In Section~\ref{sec:4online}, we consider the online power allocation and propose an algorithm to solve the problem using simple iterations.
Finally, the numerical results are provided in Section~\ref{sec5:simulation} and  our work is concluded in Section~\ref{sec6:conclusion}.

\textit{Notations:} We use boldface letters to denote vectors and matrices, use $\tau=1,\cdots,T$ to index time, use $k=1,2$ to index nodes, and use $\tilde{k}=3-k$ to refer to the other node.
    $\mathbb{R}_{++}^n$ and $\mathbb{Z}_{++}^n$ denote the $n$-dimensional vector of positive real numbers and positive integers, respectively.
In addition, $(\cdot)^{\text{T}}$ denotes the transpose operation.

\section{System Model}\label{sec:2_model}
\subsection{Network Model}
We consider the system of two sensor nodes and a fusion center, where each sensor node is equipped with an energy harvesting device and a transmit module, as shown in Fig.~\ref{fig:net_model}.
    The sensors observe environmental information (e.g., temperature and humidity) and then send the sampled data to the fusion center using the energy harvested from ambient environments.
We assume that time is slotted and consider a duration of $T$ slots.
     In each slot $\tau\in[1,\cdots, T]$, each node acquires an information sample $X_{k\tau}$ ($k=1,2$).
 It is assumed that $X_{1\tau}$ and $X_{2\tau}$ are zero-mean, unit-variance Gaussian random variables with correlation coefficient $\sqrt{\eta}$  ($0<\eta<1$).
     {We also assume that the slot length is large enough so that the samples of different slots are  independent from each other} \cite{Deniz-2015,Dong-2016-ICCC,Cui-2007-TSP}.

\begin{figure}[!t]
\centering
\includegraphics[width=2.8in]{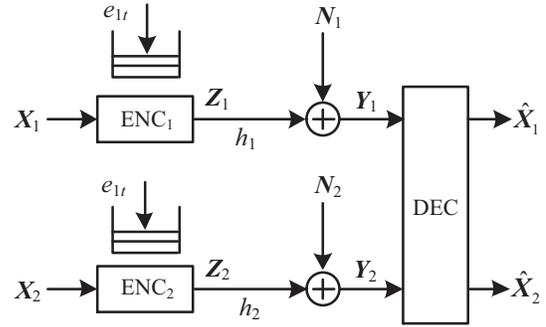}
\caption{Source coding model and information transmission model, where  
ENC and DEC are the encoder at each node and the decoder at the fusion center, respectively. } \label{fig:net_model}
\end{figure}

 Before being transmitted to the fusion center through finite-capacity channels,   the correlated samples $X_{1\tau}$ and $X_{2\tau}$ need to be compressed using distributed lossy source coding~\cite{NIT-2011}.
    Let  $r_{1\tau}$ and $r_{2\tau}$ be the  coding rate of the two sources.
    Afterwards, the obtained messages are encoded into channel codewords $\boldsymbol{Z}_{1\tau}$ and $\boldsymbol{Z}_{2\tau}$, respectively.
Upon receiving $\boldsymbol{Y}_{k\tau}$, the fusion center decodes the messages and then restores  the transmitted samples $X_{k\tau}$ with some distortion.

    We assume that the channel between each node and the fusion center is an additive white Gaussian noise (AWGN) channel with static channel  gain $h_k$ and zero-mean, unit-variance Gaussian noise, i.e., $N_{k\tau}\sim \mathcal{N}(0,1)$.
We also assume that the two nodes transmit information using two distinct frequency bands.
    Under this setting, the source-channel separation is optimal~\cite{Luo-2007}.
    Since the slot length is large, we can readily assume that Shannon capacity is achievable. In this case, the maximum transmission rate over each channel is
\begin{eqnarray*}
  c_{k\tau} \hspace{-3mm}&=&\hspace{-3mm} \frac12 \log_2(1+h_k p_{k\tau}),~~~~k=1,2,
\end{eqnarray*}
where $p_{k\tau}$ is the transmit power of node $k$ in the $\tau$-th slot.

The two nodes are both equipped with an energy buffer, where the buffer sizes are denoted as $L_1$ and $L_2$, respectively.
    We assume that the energy buffers are large and the probability of energy overflow is negligible.
 {For example, the capacity of a small button battery is more than 200 milliampere hour (mAh), which is large enough for most energy harvesting scenarios}~\cite{button-2015}.
     {Moreover, a reasonable power control will try to avoid energy overflow to maximize the energy efficiency of the network.
Thus, we do not consider the constraint of finite buffer size in this paper.}
    In each slot $\tau$, node $k$ harvests $e_{k\tau}$ units of energy (normalized by slot length so that we can use energy and power interchangeably) and put the energy into  energy buffer.
 {We assume that the harvested energy in current slot can be used either in current slot or in future slots.}
    On the contrary, since the energy harvested in future slots cannot be used in  the current slot, the transmit power of each user must obey the following energy causality constraint:
\begin{equation}\label{CSTR_energy_causality}
  \sum_{i=1}^\tau p_{ki} \leq \sum_{i=1}^\tau e_{ki}, \qquad k=1,2, ~\tau=1,\cdots,T.
\end{equation}

\subsection{Rate-Distortion Model}
    Under the squared-error measure $d(x, \hat{x})=(x- \hat{x})^2$, the rate-distortion region $\mathcal{R}(D_{1}, D_{2})$ of two zero-mean, unit variance correlated Gaussian sources is the intersection of the following three regions~\cite[Chap.~12, Theorem 3]{NIT-2011}:
\begin{eqnarray}
  \label{eq:region_1}
  \mathcal{R}_1(D_{1}) \hspace{-3mm}&=&\hspace{-3mm} \left\{\hspace{-.6mm} (r_{1}, r_{2}):r_{1}\geq \mathrm{R} \hspace{-.6mm}
                    \left( \frac{\bar{\eta}+\eta2^{-2r_{2}}}{D_{1}}\right) \hspace{-.7mm} \right\} ,\\
  \label{eq:region_2}
  \mathcal{R}_2( D_{2}) \hspace{-3mm}&=&\hspace{-3mm}\left\{ \hspace{-.6mm} (r_{1}, r_{2}):r_{2}\geq \mathrm{R} \hspace{-.6mm}
                    \left( \frac{\bar{\eta}+\eta2^{-2r_{1}}}{D_{2}}\right) \hspace{-.7mm} \right\} ,\\
  \label{eq:region_3}
  \mathcal{R}_{12}(D_{1}, D_{2}) \hspace{-3mm}&=&\hspace{-3mm}\left\{ \hspace{-.6mm} (r_{1}, r_{2}):
                    r_{1}\hspace{-.3mm}+\hspace{-.3mm}r_{2} \hspace{-.1mm} \geq \hspace{-.1mm} \mathrm{R} \hspace{-.65mm}
                    \left( \frac{\bar{\eta}\phi(D_{1},D_{2})}{2D_{1}D_{2}}\right) \hspace{-.7mm} \right\}
\end{eqnarray}
where $\bar{\eta}=1-\eta$, $\phi(D_{1},D_{2})=1+\sqrt{1+4\eta D_{1}D_{2}/\bar{\eta}^2}$ and $\mathrm{R}(x)=\frac12 \log_2x$.
     {This rate-distortion region follows the Berger--Tung inner bound} \cite[Chap.~12, Theorem 1]{NIT-2011}.  {Note that} \eqref{eq:region_1}   {and} \eqref{eq:region_2} {present the rate-distortion trade-off of source 1 and source 2, respectively,  and} \eqref{eq:region_3}  {indicates the joint constraint on the two sources.
In addition, since  two sources are correlated with each other, the achievable distortion of either source is closely related with both $r_1$ and $r_2$}.

Since both the transmit power of  nodes and the capacity of node-receiver channels  are limited, the coding rate $r_{k\tau}$ must satisfy
\begin{equation}\label{CSTR_rate}
   r_{k\tau} \leq c_{k\tau}, \qquad ~~ k=1,2, ~~\tau=1,\cdots,T.
\end{equation}
 For a given coding rate pair $[r_{1\tau},r_{2\tau}]$, by solving $D_{1\tau}$ and $D_{2\tau}$ from the rate-distortion region~\eqref{eq:region_1}--\eqref{eq:region_3}, one can show that the achievable distortion pair satisfies
\begin{eqnarray}
\label{eq:cstr6}
  D_{1\tau} \hspace{-3mm}&\geq&\hspace{-3mm}  (\bar{\eta}+\eta2^{-2r_{2\tau}})2^{-2r_{1\tau}}\triangleq d_{1\tau}^{\min} , \\
  D_{2\tau} \hspace{-3mm}&\geq&\hspace{-3mm}  (\bar{\eta}+\eta2^{-2r_{1\tau}})2^{-2r_{2\tau}} \triangleq d_{2\tau}^{\min} ,\\
  \label{eq:cstr8}
  D_{1\tau}D_{1\tau}\hspace{-3mm}&\geq&\hspace{-3mm} (\bar{\eta}+\eta2^{-2(r_{1\tau}+r_{2\tau})})2^{-2(r_{1\tau}+r_{2\tau})} \triangleq d_{12\tau}^{\min} .
\end{eqnarray}
Since  $r_{1\tau}$ and $r_{2\tau}$ are non-negative, we also have
\begin{eqnarray}
  D_{1\tau} \hspace{-3mm}&\leq&\hspace{-3mm}  \bar{\eta}+\eta2^{-2r_{2\tau}} \triangleq D_{1\tau}^{\max} , \\ \label{CSTR_d_max}
  D_{2\tau} \hspace{-3mm}&\leq&\hspace{-3mm}  \bar{\eta}+\eta2^{-2r_{1\tau}} \triangleq D_{2\tau}^{\max}.
\end{eqnarray}

From constraints \eqref{eq:cstr6}--\eqref{CSTR_d_max}, we know that all achievable distortion pairs must appear in the shaded area of Fig. \ref{fig:d1d2}.

\begin{figure}[!t]
\centering

\includegraphics[width=3.0in]{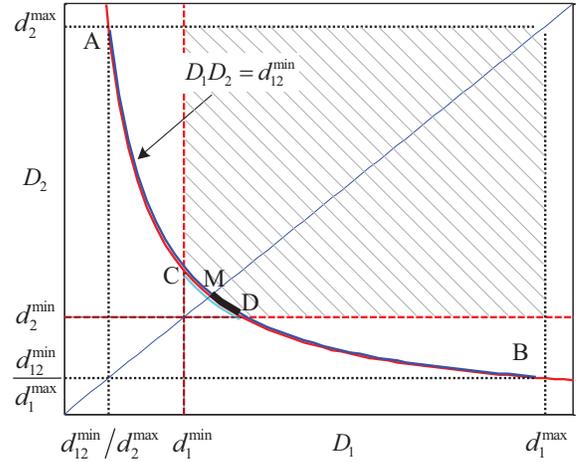}
\caption{Achievable distortion region (the shaded area) for a given coding rate pair $(r_{1\tau},r_{2\tau})$. 
}   \label{fig:d1d2}
\end{figure}

Let $w_1$ and $w_2$ be two positive weighting coefficients satisfying $w_1+w_2=1$.
    Without loss of generality, we assume that $w_1<w_2$.
For a given a coding rate $\boldsymbol{r}=[r_1,r_2]$, we define the \textit{minimum weighted-sum distortion} as
\begin{equation}\label{df:w_dis}
    D({\boldsymbol{ r}})= \mathop{\mathrm{minimize}~}\limits_{[D_1,D_2]\in \mathcal{R}(D_1,D_2)} \left(w_1D_1+w_2D_2\right).
\end{equation}

First, the weighting coefficients $w_1$ and $w_2$ implies the priority of how much the distortion of each node contributes to the system performance.
    Also, it can be seen from in Fig. \ref{fig:d1d2} that for any given weighting coefficient pair $[w_1,w_2]$,  the minimum weighted-sum distortion $D({\boldsymbol{ r}})$ occurs at some point on both the line $w_1D_1+w_2D_2=c_0$ and  the distortion region boundary, where $c_0$ is a certain constant.
Thus, by adjusting $[w_1,w_2]$ and solving the corresponding minimum weighted-sum distortion, we can obtain a full characterization of the achievable distortion region, as well as a systematic evaluation of the  validity and the reliability of the recovered samples.
    Since the  rate $\boldsymbol{r}=[r_1,r_2]$ is a function of the transmit power $\boldsymbol{p}=[p_1,p_2]$, we also denote the minimum weighted-sum distortion as $D({\boldsymbol{ p}})$ in some cases, e.g.,  in Section \ref{sec:4online}.

In this paper, we aim at minimizing the weighted-sum distortion by scheduling the harvested energy.
    In particular, the following proposition characterizes the minimum weighted-sum distortion $ D(\boldsymbol{r}) $  explicitly.
\begin{proposition} \label{lem:covex1}
     $D({\boldsymbol{ r}})$ is convex in coding rate $\boldsymbol{r}=[r_1,r_2]$. Moreover, $ D(\boldsymbol{r}) $ is given by
    \begin{equation}\label{rt:D_r}
      D(\boldsymbol{r}) =\left\{
            \begin{aligned}
                    & 2\sqrt{w_1w_2 d_{12}^{\min}}  & \textrm{if}~ r_2\geq g(r_1),\\
                    & w_1 \frac{d_{12}^{\min}}{d_{2}^{\min}} + w_2 d_2^{\min}  & \textrm{if}~r_2< g(r_1),
            \end{aligned}
            \right.
    \end{equation}
    where $g(r)=-\frac12 \log_2 \frac{w_1\bar{\rho}2^{-2r}} {w_2(\bar{\rho}+\rho2^{-2r})^2-w_1\rho 2^{-4r}}$.
\end{proposition}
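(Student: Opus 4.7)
The plan is to first solve the two-variable minimization over $(D_1,D_2)$ in closed form via KKT, and then establish convexity in $\boldsymbol{r}$ by a logarithmic change of variables. To begin, at any minimizer of $w_1D_1+w_2D_2$ the upper-bound constraints $D_1\le D_1^{\max}$ and $D_2\le D_2^{\max}$ are inactive, since decreasing either coordinate strictly decreases the objective. The relevant feasible region is therefore bounded by $D_1\ge d_1^{\min}$, $D_2\ge d_2^{\min}$, and the hyperbola $D_1D_2\ge d_{12}^{\min}$ (the shaded area in Fig.~\ref{fig:d1d2}), which is convex, and the optimum must lie on its boundary. Applying Lagrangian stationarity on the hyperbolic piece $D_1D_2=d_{12}^{\min}$ gives $w_1=\lambda D_2$ and $w_2=\lambda D_1$; together with the constraint, this yields the interior candidate $D_1^{\ast}=\sqrt{w_2 d_{12}^{\min}/w_1}$ and $D_2^{\ast}=\sqrt{w_1 d_{12}^{\min}/w_2}$ with objective value $2\sqrt{w_1w_2\,d_{12}^{\min}}$, matching the first branch of \eqref{rt:D_r}.

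Next I would determine when this interior candidate is feasible. Because $w_1<w_2$, the smaller coordinate is $D_2^{\ast}$, so the binding check is $D_2^{\ast}\ge d_2^{\min}$. Squaring and substituting $a=2^{-2r_1}$, $b=2^{-2r_2}$ into the explicit forms of $d_{12}^{\min}$ and $d_2^{\min}$ reduces this to a linear inequality in $2^{2r_2}$ that rearranges exactly to $r_2\ge g(r_1)$. In the complementary regime $r_2<g(r_1)$ the constraint $D_2\ge d_2^{\min}$ is active, so the optimum slides along the hyperbola to the corner $D_2=d_2^{\min}$, $D_1=d_{12}^{\min}/d_2^{\min}$, which gives the second branch. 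I would also verify that this corner still satisfies $D_1\ge d_1^{\min}$; under the same substitution the required inequality reduces to $(1-a)(1-b)\ge 0$, which holds since $a,b\in(0,1]$.

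The main obstacle is the convexity claim, since the closed form is piecewise and each branch is a nontrivial composition of exponentials, sums, and square roots. I would avoid a direct Hessian computation by lifting: set $y_k=\ln D_k$, so the objective becomes $w_1 e^{y_1}+w_2 e^{y_2}$, which is jointly convex in $(y_1,y_2)$ and independent of $\boldsymbol{r}$. In these variables the three active constraints read
\[
y_k\ge -2r_k\ln 2+\ln(\bar{\eta}+\eta\cdot 2^{-2r_{\tilde{k}}}),\quad k=1,2,
\]
together with $y_1+y_2\ge -2(r_1+r_2)\ln 2+\ln(\bar{\eta}+\eta\cdot 2^{-2(r_1+r_2)})$. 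Each right-hand side is the sum of a linear function and a log-sum-exp expression in $\boldsymbol{r}$, hence jointly convex. The joint set of admissible quadruples $(y_1,y_2,r_1,r_2)$ is therefore convex, and $D(\boldsymbol{r})=\min_{y_1,y_2}(w_1 e^{y_1}+w_2 e^{y_2})$ is the partial infimum of a jointly convex function over a jointly convex set, which is convex in $\boldsymbol{r}$ by the standard partial-minimization result in convex analysis.
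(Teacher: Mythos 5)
Your closed-form derivation is essentially the paper's argument in different clothing: the paper locates the minimizer by comparing the slope of the hyperbola $D_1D_2=d_{12}^{\min}$ at point D with the level-line slope $-w_1/w_2$, while you obtain the same tangency point $D_1^*=\sqrt{w_2 d_{12}^{\min}/w_1}$, $D_2^*=\sqrt{w_1 d_{12}^{\min}/w_2}$ from Lagrangian stationarity and the same threshold $r_2\gtrless g(r_1)$ from the feasibility check $D_2^*\ge d_2^{\min}$; these are equivalent, and your extra verification that the corner D satisfies $D_1\ge d_1^{\min}$ (the $(1-a)(1-b)\ge 0$ computation, which indeed reduces to $\eta\bar\eta(1-a)(1-b)\ge0$) is a detail the paper leaves to the figure. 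Where you genuinely diverge is the convexity claim. The paper proves it from the closed form: it shows $D^{\mathrm{D}}-D^{\mathrm{MD}}=\bigl(\sqrt{w_1 d_{12}^{\min}/d_2^{\min}}-\sqrt{w_2 d_2^{\min}}\bigr)^2\ge0$ with equality exactly on the switching curve, asserts each branch is decreasing and convex by inspecting derivatives, and concludes via tangency of the two surfaces. You instead lift to $y_k=\ln D_k$, observe that the objective $w_1e^{y_1}+w_2e^{y_2}$ is jointly convex and that the constraints \eqref{eq:cstr6}--\eqref{eq:cstr8} become epigraph-type constraints with log-sum-exp right-hand sides, and invoke partial minimization of a jointly convex problem. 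This buys you a Hessian-free proof that does not depend on the piecewise closed form at all (convexity of $D(\boldsymbol{r})$ follows even before the case analysis), and it sidesteps the somewhat delicate "two convex tangent surfaces glued along a curve" step that the paper only sketches; the paper's route, in exchange, yields monotonicity of $D(\boldsymbol{r})$ as a by-product, which is used later.

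One shared caveat: your step "because $w_1<w_2$, the binding check is $D_2^*\ge d_2^{\min}$" silently dismisses the constraint $D_1\ge d_1^{\min}$ at the tangency point, exactly as the paper does when it says it can "focus on curve MD and point D." Since $d_1^{\min}$ and $d_2^{\min}$ are different thresholds, $D_1^*>D_2^*$ does not imply $D_1^*\ge d_1^{\min}$; for instance with $r_1$ near $0$ and $r_2$ large one has $r_2\ge g(r_1)$ yet $D_1^*<d_1^{\min}$, so the true minimizer sits at the other corner (point C) and neither branch of \eqref{rt:D_r} applies. This is a gap in the paper's own proof rather than one you introduced, but if you want your KKT route to be airtight you should either add the hypothesis that excludes the C-corner regime or include it as a third case.
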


\begin{proof}
    See Appendix \ref{prf:lem1}.
\end{proof}

Since $D(\boldsymbol{r})$ is convex, the offline distortion minimization problem can be solved using  standard optimization techniques. We will discuss more on this in Section \ref{sec:3offline}.

\subsection{Problem Formulation}
In this paper, we consider both offline and online power allocation schemes to minimize the weighted-sum distortion.
      {In the offline case, we assume that the energy harvesting process is known non-causally at the two nodes. }
Thus, the weighted-sum distortion over a certain period can be minimized and the optimal power control can be obtained before the real transmission, by solving the following optimization problem:
\begin{eqnarray*}
 \mathop{\mathrm{minimize}~}\limits_{p_{k\tau}} \hspace{-3mm}& &  \hspace{-2mm}\sum_{\tau=1}^T \left( w_1 D_{1\tau} +w_2 D_{2\tau} \right) \\
  (\mathrm{P}_1)~~\mathrm{subject~to} \hspace{-3mm}&&\hspace{-2mm}  \text{Eq.}~\eqref{CSTR_energy_causality}, \eqref{CSTR_rate}-\eqref{CSTR_d_max},\\
                        \hspace{-2mm}&&\hspace{-2mm} p_{k\tau} \geq 0, ~ k=1,2,~\tau=1,\cdots,T.
\end{eqnarray*}

 {For the online power control, the sensor nodes are unaware of the energy harvesting process.
    Nevertheless, we assume that the distribution of the energy harvesting process is known to the nodes and thus the nodes can  adjust their transmit power based  on their causal energy status. }

Let $\varrho=\{\rho_1,\cdots,\rho_T\}$ be the power control policy that maps the energy state (remaining energy of nodes) of the system to the transmit power of each node.
    We then minimize the expectation of the weighted-sum distortion by solving the following problem:
\begin{eqnarray*}
 \mathop{\mathrm{minimize~}}\limits_{\varrho} \hspace{-3mm}& &  \hspace{-2mm} \lim_{T\rightarrow\infty}
                                                                            \mathbb{E}\left(\frac{1}{T}\sum_{\tau=1}^T \left(w_1 D_{1\tau} +w_2 D_{2\tau}\right) \right)\\
  (\mathrm{P}_2)~~\mathrm{subject~to} \hspace{-3mm}&&\hspace{-2mm} \text{Eq.}~\eqref{CSTR_energy_causality}, \eqref{CSTR_rate}-\eqref{CSTR_d_max}, \\
                        \hspace{-2mm}&&\hspace{-2mm} p_{k\tau} \geq 0, ~ k=1,2,~\tau=1,\cdots,T.
\end{eqnarray*}

%

\section{Offline Power Control} \label{sec:3offline}
When the energy harvesting process is known non-causally, we can solve $(\mathrm{P}_1)$ using KKT conditions \cite{cv_byod-2004}.
    Further,  we show that the optimal offline power control can be explained as a generalized water-filling problem.

\subsection{Standard Formulation}
    To utilize the channels  efficiently, we assume $r_{k\tau}=c_{k\tau}$ in each slot for each node.
Using variable substitution $p_{k\tau}=\frac{1}{h_k}(2^{2r_{k\tau}}-1)$ and after some manipulations, the optimization problem $(\mathrm{P}_1)$ can be expressed in the standard convex optimization form as
\begin{eqnarray}
\nonumber
\hspace{-6mm}  \mathop{\mathrm{minimize~}}\limits_{ r_{k\tau}} ~~~\hspace{-3mm}& &  \hspace{-3mm}\sum_{\tau=1}^T \left( w_1 D_{1\tau} +w_2 D_{2\tau} \right) \\
  \label{cst:lambda1}
\hspace{-6mm} (\mathrm{P}_3)~~\mathrm{subject~to} \quad \hspace{-3mm}&&\hspace{-3mm} -D_{1\tau}+(\bar{\eta}+\eta2^{-2r_{2\tau}})2^{-2r_{1\tau}} \leq 0,\\
                            \label{cst:lambda2}
\hspace{-6mm}                        \quad \hspace{-3mm}&&\hspace{-3mm} -D_{2\tau}+(\bar{\eta}+\eta2^{-2r_{1\tau}})2^{-2r_{2\tau}} \leq 0, \\
 \nonumber
\hspace{-6mm}                        \quad \hspace{-3mm}&&\hspace{-3mm}
                                                                                                                                        -\log_2 D_{1\tau}-\log_2 D_{2\tau} - 2(r_{1\tau}+r_{2\tau}) \\
                            \label{cst:lambda3}
\hspace{-6mm}                        \quad \hspace{-3mm}&&\hspace{-3mm}
                                                                                                                                            + \log_2(\bar{\eta}+\eta2^{-2(r_{1\tau}+r_{2\tau})}) \leq 0,\\
                         \label{cst:lambda4}
\hspace{-6mm}                        \quad \hspace{-3mm}&&\hspace{-3mm} (D_{1\tau}-\bar{\eta})2^{2r_{2\tau}} -\eta \leq 0, \\
                         \label{cst:lambda5}
                        \quad \hspace{-3mm}&&\hspace{-3mm} (D_{2\tau}-\bar{\eta})2^{2r_{1\tau}} -\eta \leq 0, \\
\hspace{-6mm}                        \quad \hspace{-3mm}&&\hspace{-3mm} \sum_{i=1}^\tau \frac{1}{h_k}(2^{2r_{ki}}-1) - \sum_{i=1}^\tau e_{ki} \leq 0;\\
                         \label{cst:theta}
                        \quad \hspace{-3mm}&&\hspace{-3mm} -r_{k\tau} \leq 0,  
                                                                                                                                            \forall~k=1,2,~ \tau=1,\cdots,T.
\end{eqnarray}

    As is shown in Proposition \ref{lem:covex1}, the objective function of $(\mathrm{P}_3)$ is convex.
The achievable region of $(D_{1\tau}, D_{2\tau})$  has been shown to be convex in \cite{Deniz-2015}.
    Moreover, one can show that the constraints \eqref{cst:lambda1}--\eqref{cst:theta} of  $(\mathrm{P}_3)$  are all convex.
Thus, $(\mathrm{P}_3)$ is a convex optimization problem and can be solved using  KKT conditions \cite{cv_byod-2004}.
    The corresponding Lagrangian function is given by
\begin{equation} \label{eq:lagrange}   
\begin{split}
  \mathcal{L} &= \sum_{\tau=1}^T \hspace{-1mm} \Big(w_1 D_{1\tau} + w_2 D_{2\tau}\\
   & +\lambda_{1\tau} ( -D_{1\tau}+(\bar{\eta}+\eta2^{-2r_{2\tau}})2^{-2r_{1\tau}}) \\
   &+\lambda_{2\tau} ( -D_{2\tau}+(\bar{\eta}+\eta2^{-2r_{1\tau}})2^{-2r_{2\tau}}) \\
   & + \lambda_{3\tau} \big(-\log_2 D_{1\tau} -\log_2  D_{2\tau} \\
  &+ \log_2(\bar{\eta}+\eta 2^{-2(r_{1\tau}+r_{2\tau})}) - 2(r_{1\tau}+r_{2\tau})\big )\\
  &  +  \lambda_{4\tau} ((D_{1\tau}-\bar{\eta})2^{2r_{2\tau}} -\eta) \\
  & +  \lambda_{5\tau} ((D_{2\tau}-\bar{\eta})2^{2r_{1\tau}} -\eta)  \\
  & + \mu_{1\tau} \sum_{i=1}^\tau \frac{1}{h_1}(2^{2r_{1i}}-1)
            +\mu_{2\tau} \sum_{i=1}^\tau \frac{1}{h_2}(2^{2r_{2i}}-1)   \\
  & -\theta_{1\tau}r_{1\tau}-\theta_{2\tau}r_{2\tau} \Big),
\end{split}
\end{equation}
where $\lambda_{i\tau}$, $\mu_{k\tau}$, and $\theta_{k\tau}~ (i = 1, 2, \cdots, 5, k=1,2)$ are non-negative Lagrange multipliers associated with constraints (\ref{cst:lambda1})--(\ref{cst:theta}), respectively.

Next, we investigate the property of the multipliers to simplify the optimization problem.

\begin{proposition} \label{prop:lambda0}
    $\lambda_{1\tau}=0$, $\lambda_{4\tau}=0$, $\lambda_{5\tau}=0$, and $\lambda_{3\tau}>0$ for $\tau=1,\cdots,T$.
\end{proposition}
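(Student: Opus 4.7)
The plan is to exploit the KKT stationarity and complementary-slackness conditions for the Lagrangian \eqref{eq:lagrange} to pin down each of the four multipliers in turn. Almost all of the work will be done by two elementary identities that I would establish at the outset: with $a=2^{-2r_{1\tau}}$ and $b=2^{-2r_{2\tau}}$,
\[
d_{12\tau}^{\min}-d_{1\tau}^{\min}d_{2\tau}^{\min}=ab\,\bar\eta\eta(1-a)(1-b),
\]
\[
D_{1\tau}^{\max}\,d_{2\tau}^{\min}-d_{12\tau}^{\min}=b(1-a)\bigl[\bar\eta^{2}+b\eta(1+a-\eta)\bigr],
\]
both right-hand sides (and their $1\!\leftrightarrow\!2$ symmetric counterparts) being strictly positive whenever $r_{1\tau},r_{2\tau}>0$. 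The two stationarity equations I will use are
\[
w_1-\lambda_{1\tau}-\frac{\lambda_{3\tau}}{D_{1\tau}\ln 2}+\lambda_{4\tau}\,2^{2r_{2\tau}}=0,
\]
\[
w_2-\lambda_{2\tau}-\frac{\lambda_{3\tau}}{D_{2\tau}\ln 2}+\lambda_{5\tau}\,2^{2r_{1\tau}}=0,
\]
obtained from $\partial\mathcal{L}/\partial D_{1\tau}=0$ and $\partial\mathcal{L}/\partial D_{2\tau}=0$.

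First, I would dispose of $\lambda_{4\tau}=\lambda_{5\tau}=0$ by contradiction. If $\lambda_{4\tau}>0$, complementary slackness gives $D_{1\tau}=D_{1\tau}^{\max}$; the first identity yields $D_{1\tau}^{\max}>d_{1\tau}^{\min}$, so the lower-bound constraint on $D_{1\tau}$ is slack and $\lambda_{1\tau}=0$, and the second identity yields $D_{1\tau}^{\max}\,D_{2\tau}\geq D_{1\tau}^{\max}\,d_{2\tau}^{\min}>d_{12\tau}^{\min}$, so the joint constraint is slack and $\lambda_{3\tau}=0$. The first stationarity equation then reduces to $w_1+\lambda_{4\tau}\,2^{2r_{2\tau}}=0$, impossible since both terms are strictly positive. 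The symmetric argument gives $\lambda_{5\tau}=0$.

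Next, I would prove $\lambda_{3\tau}>0$, again by contradiction. If $\lambda_{3\tau}=0$, the two stationarity equations, with $\lambda_{4\tau}=\lambda_{5\tau}=0$ already in hand, force $\lambda_{1\tau}=w_1>0$ and $\lambda_{2\tau}=w_2>0$, and complementary slackness then forces $D_{1\tau}=d_{1\tau}^{\min}$ and $D_{2\tau}=d_{2\tau}^{\min}$. But by the first identity $D_{1\tau}D_{2\tau}<d_{12\tau}^{\min}$, violating the joint rate--distortion constraint \eqref{cst:lambda3}.

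Finally, to reach $\lambda_{1\tau}=0$ I would invoke Proposition~\ref{lem:covex1}, which characterizes the optimal $(D_{1\tau},D_{2\tau})$ for any fixed rate pair. In the regime $r_{2\tau}<g(r_{1\tau})$ the optimizer satisfies $D_{2\tau}=d_{2\tau}^{\min}$ and $D_{1\tau}=d_{12\tau}^{\min}/d_{2\tau}^{\min}$, so the first identity gives $D_{1\tau}>d_{1\tau}^{\min}$ immediately. In the regime $r_{2\tau}\geq g(r_{1\tau})$ only the joint constraint is active at the inner optimum, giving $D_{1\tau}=\sqrt{w_2\,d_{12\tau}^{\min}/w_1}$; the assumption $w_1<w_2$ is essential here, since it makes the case-1 threshold $g(r_{1\tau})$, which by construction enforces $D_{2\tau}\geq d_{2\tau}^{\min}$, simultaneously enforce $D_{1\tau}\geq d_{1\tau}^{\min}$. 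Either way, complementary slackness gives $\lambda_{1\tau}=0$. I expect this last regime to be the main technical obstacle, since the implication from $r_{2\tau}\geq g(r_{1\tau})$ to $D_{1\tau}\geq d_{1\tau}^{\min}$ uses the weight asymmetry in a nontrivial way and is the one step that cannot be reduced to a purely algebraic factorization of the kind used above.
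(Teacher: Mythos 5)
Your treatment of $\lambda_{4\tau}$, $\lambda_{5\tau}$ and $\lambda_{3\tau}$ is correct, and it is actually more explicit than the paper's own argument, which simply reads off from Fig.~\ref{fig:d1d2} and Proposition~\ref{lem:covex1} that constraints \eqref{cst:lambda1}, \eqref{cst:lambda4}, \eqref{cst:lambda5} are inactive while \eqref{cst:lambda3} is active, and then applies complementary slackness. Your factorization $d_{12\tau}^{\min}-d_{1\tau}^{\min}d_{2\tau}^{\min}=ab\,\bar\eta\eta(1-a)(1-b)$ checks out, the two stationarity equations in $D_{1\tau}$, $D_{2\tau}$ are the right ones, and the contradictions you derive for $\lambda_{4\tau}>0$, $\lambda_{5\tau}>0$ and $\lambda_{3\tau}=0$ are sound (both you and the paper tacitly need $r_{1\tau},r_{2\tau}>0$ for strictness).

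The genuine gap is precisely the step you flag for $\lambda_{1\tau}=0$ in the regime $r_{2\tau}\ge g(r_{1\tau})$, and it is worse than a missing computation: the implication you hope to prove is false at the level of ``any fixed rate pair.'' The threshold $g$ is obtained solely by comparing the hyperbola slope at the corner $\mathrm{D}$ with $-w_1/w_2$; it says nothing about whether the tangency point $D_{1\tau}=\sqrt{w_2 d_{12\tau}^{\min}/w_1}$ respects the other lower bound $D_{1\tau}\ge d_{1\tau}^{\min}$. Concretely, with the paper's own parameters $\eta=0.7$, $w_1=0.3$, $w_2=0.7$ and $a=2^{-2r_{1\tau}}=0.9$, $b=2^{-2r_{2\tau}}=0.1$, one gets $r_{2\tau}\approx 1.66>g(r_{1\tau})\approx 1.21$, yet $\sqrt{w_2 d_{12\tau}^{\min}/w_1}\approx 0.276<d_{1\tau}^{\min}\approx 0.333$; the tangency point is infeasible, the constrained minimizer sits at the corner $\mathrm{C}$ where $D_{1\tau}=d_{1\tau}^{\min}$ is active, and the stationarity equations then give $\lambda_{1\tau}=w_1-w_2 D_{2\tau}/D_{1\tau}\approx 0.09>0$. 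So your step 3 cannot be completed as proposed: it can only be closed either by taking Proposition~\ref{lem:covex1} at face value (which silently excludes this corner case, exactly as the paper's one-line geometric proof does) or by an additional argument that such strongly asymmetric rate pairs cannot arise at an optimum of $(\mathrm{P}_3)$ --- an argument neither your proposal nor the paper supplies. A smaller point in the same step: even where the tangency point is feasible, concluding $\lambda_{1\tau}=0$ from complementary slackness needs strict slackness $D_{1\tau}>d_{1\tau}^{\min}$, not the weak inequality you state.
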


\begin{proof}
See Appendix \ref{prf:prop1}.
\end{proof}

The following proposition specifies $\lambda_{2\tau}$.

\begin{proposition} \label{prop:lambda2}
     $\lambda_{2\tau}=0$ if $r_{2\tau}\geq g(r_{1\tau})$, and $\lambda_{2\tau}>0$ if $r_{2\tau}< g(r_{1\tau})$, where $g(r)=-\frac12 \log_2 \frac{w_1\bar{\eta}2^{-2r}} {w_2(\bar{\eta}+\eta2^{-2r})^2-w_1\eta 2^{-4r}}$.
\end{proposition}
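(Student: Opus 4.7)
The plan is to derive the proposition from complementary slackness applied to the KKT system already set up in \eqref{eq:lagrange}, using the explicit minimizer from Proposition~\ref{lem:covex1} to decide when the constraint \eqref{cst:lambda2} is active. Recall that $\lambda_{2\tau}$ is the multiplier of the inequality $-D_{2\tau}+(\bar\eta+\eta 2^{-2r_{1\tau}})2^{-2r_{2\tau}}\le 0$, i.e.\ $D_{2\tau}\ge d_{2\tau}^{\min}$. Thus it suffices to characterize, at the optimum of $(\mathrm{P}_3)$, exactly when the bound $D_{2\tau}=d_{2\tau}^{\min}$ is achieved.

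First, fix the rate pair $(r_{1\tau},r_{2\tau})$ and view the distortion subproblem as the minimization of $w_1 D_{1\tau}+w_2 D_{2\tau}$ over the achievable region given by \eqref{eq:cstr6}--\eqref{CSTR_d_max}. Proposition~\ref{lem:covex1} has already solved this subproblem in closed form. In the regime $r_{2\tau}\ge g(r_{1\tau})$, the optimum is $D(\boldsymbol r)=2\sqrt{w_1 w_2\, d_{12}^{\min}}$, and it is attained on the product-constraint boundary $D_{1\tau}D_{2\tau}=d_{12}^{\min}$ at the balanced point $w_1 D_{1\tau}=w_2 D_{2\tau}$, which lies strictly above the line $D_{2\tau}=d_{2\tau}^{\min}$. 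Since the inequality \eqref{cst:lambda2} is then strict, complementary slackness immediately gives $\lambda_{2\tau}=0$.

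Second, in the regime $r_{2\tau}<g(r_{1\tau})$ Proposition~\ref{lem:covex1} shows that the minimizer is pinned at $D_{2\tau}=d_{2\tau}^{\min}$, so \eqref{cst:lambda2} is active. To upgrade ``active constraint'' to ``strictly positive multiplier,'' I would write the stationarity condition $\partial \mathcal{L}/\partial D_{2\tau}=0$, which, using $\lambda_{4\tau}=\lambda_{5\tau}=0$ from Proposition~\ref{prop:lambda0}, reduces to
\[
w_2-\lambda_{2\tau}-\frac{\lambda_{3\tau}}{D_{2\tau}\ln 2}=0,
\]
so $\lambda_{2\tau}=w_2-\lambda_{3\tau}/(D_{2\tau}\ln 2)$. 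The analogous equation for $D_{1\tau}$ gives $\lambda_{3\tau}=w_1 D_{1\tau}\ln 2$ (since \eqref{cst:lambda1} is inactive by Proposition~\ref{prop:lambda0}), whence $\lambda_{2\tau}=w_2-w_1 D_{1\tau}/D_{2\tau}$. In the balanced case $r_{2\tau}=g(r_{1\tau})$ one has $w_1 D_{1\tau}=w_2 D_{2\tau}$ and $\lambda_{2\tau}=0$; strictly below the threshold, the forced value $D_{2\tau}=d_{2\tau}^{\min}$ is smaller than the balanced value $\sqrt{w_1 d_{12}^{\min}/w_2}$, hence $w_1 D_{1\tau}/D_{2\tau}<w_2$, giving $\lambda_{2\tau}>0$ as claimed.

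Third, to make the split at $r_{2\tau}=g(r_{1\tau})$ rigorous (and to reconcile it with the formula in the statement), I would verify algebraically that the unconstrained minimizer on $D_1 D_2=d_{12}^{\min}$ satisfies $D_2\ge d_2^{\min}$ iff $w_1 d_{12}^{\min}\ge w_2(d_2^{\min})^2$; writing $a=2^{-2r_{1\tau}}$, $b=2^{-2r_{2\tau}}$ and simplifying shows this is equivalent to $b\le w_1\bar\eta\, a/[w_2(\bar\eta+\eta a)^2-w_1\eta a^2]$, i.e.\ $r_{2\tau}\ge g(r_{1\tau})$. The main obstacle I foresee is this last algebraic bookkeeping, together with verifying that the denominator $w_2(\bar\eta+\eta a)^2-w_1\eta a^2$ is positive so that $g$ is well defined; both steps are already implicit in the proof of Proposition~\ref{lem:covex1}, so most of the work can be cited rather than redone.
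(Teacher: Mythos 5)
Your route is essentially the paper's: identify, via the closed form in Proposition~\ref{lem:covex1}, whether the per-slot minimizer sits in the interior of the hyperbola segment (balanced point) or at the corner point D, and then read off $\lambda_{2\tau}$ from complementary slackness; your threshold algebra in the last paragraph (balanced $D_2\ge d_2^{\min}$ iff $w_1 d_{12}^{\min}\ge w_2(d_2^{\min})^2$ iff $r_{2\tau}\ge g(r_{1\tau})$) is exactly the slope comparison $\mathrm{d}D_2/\mathrm{d}D_1$ vs.\ $-w_1/w_2$ that the paper solves to obtain $g$. Your extra step of deriving $\lambda_{3\tau}=w_1 D_{1\tau}\ln 2$ and $\lambda_{2\tau}=w_2-w_1 D_{1\tau}/D_{2\tau}$ from stationarity is in fact a cleaner justification of strict positivity than the paper's bare assertion that an active constraint ``implies'' $\lambda_{2\tau}>0$, and it agrees with the expression for $\lambda_{2\tau}$ the paper later uses in the proof of Theorem~\ref{th:p_structure}.

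One direction is flipped in your second paragraph: when $r_{2\tau}<g(r_{1\tau})$, the balanced point is infeasible precisely because its $D_2$-coordinate $\sqrt{w_1 d_{12}^{\min}/w_2}$ falls \emph{below} $d_{2}^{\min}$, so the forced value $D_{2\tau}=d_{2\tau}^{\min}$ is \emph{larger} than the balanced value, not smaller. As written, ``$d_{2\tau}^{\min}$ smaller than $\sqrt{w_1 d_{12}^{\min}/w_2}$'' would give $w_1 D_{1\tau}/D_{2\tau}=w_1 d_{12}^{\min}/(d_{2\tau}^{\min})^2>w_2$ and hence $\lambda_{2\tau}<0$, contradicting the claim (and your own third-paragraph equivalence). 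With the direction corrected, $w_1 d_{12}^{\min}<w_2(d_{2\tau}^{\min})^2$ gives $w_1 D_{1\tau}/D_{2\tau}<w_2$ and $\lambda_{2\tau}>0$, so the fix is a one-word change and the rest of the argument stands. (Minor point: at the boundary $r_{2\tau}=g(r_{1\tau})$ the constraint \eqref{cst:lambda2} is active, so complementary slackness alone does not give $\lambda_{2\tau}=0$ there; your formula $\lambda_{2\tau}=w_2-w_1D_{1\tau}/D_{2\tau}=0$ at the coinciding point covers it, as you note.)
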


\begin{proof}
See Appendix \ref{prf:prop2}.
\end{proof}

In the sequel, we denote $x=2^{-2r_{1\tau}}$ and $y=2^{-2r_{2\tau}}$ for  simplicity.
    We also define
\begin{equation}\label{df:wl}
  \nu_{k\tau}=\frac{1}{\sum_{i=\tau}^T \mu_{ki}}, \qquad k=1,2, \tau=1,\cdots, T,
\end{equation}
 as the \textit{water level} associated with transmit power of node $k$.

\subsection{Structure of Optimal Policy}
From our discussion so far, we have the following observations on the structure of the optimal power allocation.

\begin{theorem} \label{th:p_structure}
    For the optimal offline power allocation, following conditions should be satisfied:
    \begin{enumerate}
      \item the energy buffer of node $k$ should be depleted if node $k$ harvests more energy on average in future slots;
      \item transmit power $p_{k\tau}$ should be increased after the slots in which the energy buffer of node $k$ is depleted;
      \item transmit power $p_{k\tau}$ should be decreased after the slots in which the energy buffer of the other node $\tilde{k}$  is depleted.
    \end{enumerate}
\end{theorem}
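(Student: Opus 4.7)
The plan is to exploit the KKT conditions attached to the Lagrangian $\mathcal{L}$. Complementary slackness on the energy causality constraints says that $\mu_{k\tau}>0$ only at those slots $\tau$ at which the buffer of node $k$ is depleted, so the water levels $\nu_{k\tau}$ are non-decreasing in $\tau$ and jump strictly across every depletion slot of node $k$. All three claims will follow from this monotonicity together with a monotone dependence of $p_{k\tau}$ on the pair $(\nu_{1\tau},\nu_{2\tau})$.

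First I would write out the stationarity equations $\partial\mathcal{L}/\partial r_{1\tau}=0$ and $\partial\mathcal{L}/\partial r_{2\tau}=0$, invoking Propositions~\ref{prop:lambda0} and~\ref{prop:lambda2} to set $\lambda_{1\tau}=\lambda_{4\tau}=\lambda_{5\tau}=0$ and to resolve $\lambda_{2\tau}$ according to whether $r_{2\tau}\gtrless g(r_{1\tau})$. Substituting $x=2^{-2r_{1\tau}}$, $y=2^{-2r_{2\tau}}$ and $p_{k\tau}=(2^{2r_{k\tau}}-1)/h_k$, routine algebra rearranges these conditions into closed-form relations of the form $F_{k}(p_{1\tau},p_{2\tau})=1/\nu_{k\tau}$ in each regime. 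The key lemma to extract is that $p_{k\tau}$ is strictly increasing in $\nu_{k\tau}$ and strictly decreasing in $\nu_{\tilde{k}\tau}$; intuitively, raising node $\tilde{k}$'s water level increases $r_{\tilde{k}}$, which sharpens the side information about $X_k$ and therefore lets node $k$ afford a lower rate at the same distortion.

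Claims 2 and 3 then follow immediately. If node $k$'s buffer is depleted at slot $\tau$, then $\mu_{k\tau}>0$ forces $\nu_{k,\tau+1}>\nu_{k\tau}$ while $\nu_{\tilde{k},\tau+1}\ge\nu_{\tilde{k},\tau}$, and the monotonicity lemma yields $p_{k,\tau+1}>p_{k\tau}$; symmetrically, if $\tilde{k}$'s buffer depletes at $\tau$ but $k$'s does not, then $\nu_{k,\tau+1}=\nu_{k\tau}$ and $\nu_{\tilde{k},\tau+1}>\nu_{\tilde{k},\tau}$, giving $p_{k,\tau+1}<p_{k\tau}$. For claim 1 I would use the water-filling reading of the stationarity: between two consecutive depletion instants of node $k$ the water level is constant, and complementary slackness forces this common value to equal the average harvested energy on that interval; hence if the future average strictly exceeds the past average, the two blocks must be distinct and a depletion of node $k$ must separate them. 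The main obstacle is making the monotonicity lemma precise: because the two rates are coupled through the regime-dependent expression in Proposition~\ref{lem:covex1} and through the joint Berger--Tung constraint carrying $\lambda_{3\tau}$, extracting an unambiguous sign of $\partial p_{k\tau}/\partial \nu_{\tilde{k}\tau}$ demands a careful case analysis across the two regimes of Proposition~\ref{prop:lambda2}.
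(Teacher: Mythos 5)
Your outline follows the same route as the paper: complementary slackness on the energy-causality constraints makes the water levels $\nu_{k\tau}=1/\sum_{i=\tau}^T\mu_{ki}$ piecewise constant and nondecreasing, changing only at depletion slots of node $k$, and claims 2 and 3 are then reduced to showing that $p_{k\tau}$ increases in $\nu_{k\tau}$ and decreases in $\nu_{\tilde k\tau}$. The problem is that you state this monotonicity as a ``key lemma'' supported only by a side-information intuition and explicitly defer its verification; that verification is essentially the entire content of the paper's proof. The paper carries it out by case analysis on $\lambda_{2\tau}$: in the regime $\lambda_{2\tau}=0$ it solves $\lambda_{3\tau}=\ln 2\sqrt{w_1w_2(\bar\eta+\eta xy)xy}$ from $\partial\mathcal{L}/\partial D_{1\tau}=0$, substitutes into the stationarity conditions in $r_{1\tau},r_{2\tau}$ to obtain implicit systems $F_{11}=0$, $F_{12}=0$ in $(x,y,\nu_{1\tau},\nu_{2\tau})$ with $x=2^{-2r_{1\tau}}$, $y=2^{-2r_{2\tau}}$, verifies the signs of the partial derivatives (e.g.\ $\partial F_{11}/\partial x>0$, $\partial F_{11}/\partial\nu_{1\tau}>0$) to conclude $\partial x/\partial\nu_{1\tau}<0$, $\partial x/\partial\nu_{2\tau}>0$, and so on, and then repeats the computation in the regime $\lambda_{2\tau}>0$ with $\lambda_{2\tau}$ and $\lambda_{3\tau}$ solved explicitly from the distortion derivatives; the statement about powers follows because $p_{1\tau}=(1/x-1)/h_1$ and $p_{2\tau}=(1/y-1)/h_2$ are decreasing in $x$ and $y$. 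The stationarity conditions are not invertible closed forms $F_k(p_{1\tau},p_{2\tau})=1/\nu_{k\tau}$; the sign extraction is an implicit-differentiation argument, so calling it routine algebra and leaving it as an ``obstacle'' leaves claims 2 and 3 unproven.

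Your argument for claim 1 also relies on a property that fails in this coupled problem: between consecutive depletions of node $k$ the generalized water level is constant, but the transmit power of node $k$ is generally not constant there (it decreases whenever the other node's water level jumps, which is exactly claim 3 and is visible in Fig.~\ref{fig:p12k}), so the constant water level does not equal the average harvested energy on that interval. Complementary slackness only yields that the total energy consumed between two depletion instants equals the total energy harvested there. The paper instead justifies claim 1 by the informal observation that a node should not empty its buffer unless its future average harvest exceeds its past average, since otherwise energy should be saved for later slots; your claim 1 needs to be argued along those lines rather than through a single-user water-filling identity.
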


\begin{proof}
    See Appendix \ref{prf:thm1}.
\end{proof}

Theorem \ref{th:p_structure} presents the rule of the optimal power control in the offline case, which is obtained from the causality constraint of energy arrivals and the interplay between the two nodes.
    In the traditional power allocations under average power constraint, water level is the inverse of the first order derivative of the objective function and the system performance would be optimal if the water level is even throughout the transmission.
To be specific, under the optimal power allocation, allocating some small amount of additional power in whatever way leads to the same performance improvement, i.e., the marginal performance gain is even \cite{Ulukus-2012-packet_mac, GWF-2006}.
    Whereas, in the energy harvesting scenario, the harvested energy can never flow from future to the past so that the water level cannot be made even throughout the transmission.
Therefore, when the energy harvested by a node is larger than the previous period, we can only increase the transmit power of following period, not spare some energy for the previous period.
    Note that this also increases the water level of the following period.
Therefore, the water level of each node is piecewise constant and monotonically increasing.

    Moreover, during a period when the water level of node $k$ is constant and the transmit power of the other node $\tilde{k}$ is increased, the weighted-sum distortion tends to be smaller and the water level of  both nodes of the following slot becomes larger.
It should be noted, however, that the best performance is achieved when the water-level is even throughout this period.
    Thus, we should use smaller transmit power in the following slots and increase the transmit power of previous slots, so that the water level could be even during this period.

\begin{example}
    Consider the special case of $\eta=1$ and the two sources are perfectly correlated, namely, transmitting the same source using two energy harvesting nodes.
        In this case, the  rate-distortion region \eqref{eq:cstr6}--\eqref{eq:cstr8} is degraded to
        \begin{eqnarray*}
                D_{1\tau} \hspace{-2.75mm} &\geq& \hspace{-2.75mm} 2^{-2(r_{1\tau}+r_{2\tau})}, \\
                D_{2\tau} \hspace{-2.75mm} &\geq& \hspace{-2.75mm} 2^{-2(r_{1\tau}+r_{2\tau})}.
        \end{eqnarray*}
        It is clear that the minimum weighted-sum distortion is obtained when $D_1=D_2=2^{-2(r_1+r_2)}$.
            Therefore, the optimal power control can be obtained by solving the following problem:
            \begin{eqnarray*}
                    \nonumber
                    \hspace{-6mm}  \mathop{\mathrm{minimize~}}\limits_{ r_{k\tau}} ~~~\hspace{-3mm}& &  \hspace{-3mm}\sum_{\tau=1}^T 2^{-2(r_{1\tau}+r_{2\tau})} \\
                    \hspace{-6mm} (\mathrm{P}'_3)~~\mathrm{subject~to} \quad \hspace{-3mm}&&\hspace{-3mm}  \sum_{i=1}^\tau \frac{1}{h_k}(2^{2r_{ki}}-1) - \sum_{i=1}^\tau e_{ki} \leq 0;\\
                     \quad \hspace{-3mm}&&\hspace{-3mm} -r_{k\tau} \leq 0,      \forall~k=1,2,~ \tau=1,\cdots,T.
                \end{eqnarray*}

            By defining corresponding Lagrangian and set the derivatives with respect to $r_{1\tau}$ and $r_{2\tau}$ to zero, we have
             \begin{eqnarray*}
                p_{1\tau} \hspace{-2.75mm} &=& \hspace{-2.75mm} \frac{1}{h_1} \left[  \Big( \frac{h_1^2\nu^2_{1\tau}}{h_2\nu_{2\tau}}\Big)^{\frac13}-1 \right]^+, \\
                p_{2\tau} \hspace{-2.75mm} &=& \hspace{-2.75mm} \frac{1}{h_2} \left[  \Big( \frac{h_2^2\nu^2_{2\tau}}{h_1\nu_{1\tau}}\Big)^{\frac13}-1 \right]^+,
         \end{eqnarray*}
         where $\nu_{k\tau}=\frac{1}{\sum_{i=\tau}^T \mu_{k\tau}}$ is the water level (see \eqref{df:wl}), $[x]^+=x$  if $x>0$ and $[x]^+=0$ otherwise.

         It is clear that $p_{1\tau}$ is increasing with $\nu_{1\tau}$ and decreasing with $\nu_{2\tau}$.
            Since $\nu_{1\tau}$ is increasing with time and changes only when the energy buffer of  node 1 is depleted, we know $p_{1\tau}$ should be increased when its own energy buffer is depleted and should be decreased when the energy buffer of node 2 is depleted, as shown in Fig. \ref{fig:eta_sigma}.
         Similar conclusion can be drawn for the transmit power of node 2, which validates our result in Theorem \ref{th:p_structure}.
            Moreover, this example implies that the correlation between  two sources does not change the structure of the optimal power control, even in the extreme cases such as $\eta=1$ and $\eta=0$.
\end{example}

{
\centering
\begin{figure}[!t]
\includegraphics[width=3.8in]{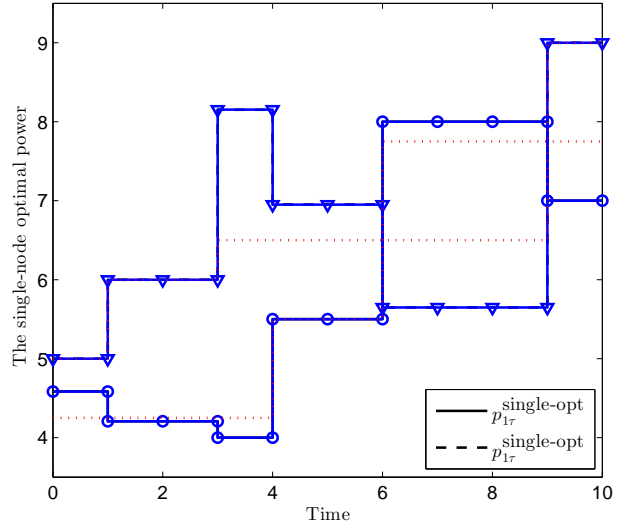}
\caption{The optimal power control of the two nodes, where $\eta=1$. } \label{fig:eta_sigma}
\end{figure}
}

\subsection{Iterative Solution}
In this subsection, we present a power allocation algorithm to find the optimal policy efficiently, as shown in Table \ref{Alg:iter_wt_fil}.

First, we set the transmit power of node 2 to zero, i.e., $p_{2\tau}^0=0$ for $1\leq\tau\leq T$ and consider a single node  distortion minimization for node 1.
    In this case, we have $D_1=2^{-2r_{1\tau}}$ and $D_2=\bar{\eta}+\eta2^{-2r_{1\tau}}$.
Recalling that $r_{1\tau}=\frac12\log_2(1+h_1p_{1\tau})$, $(\mathrm{P}_3)$ can be convert to:
\begin{eqnarray*}
 \mathop{\mathrm{minimize~}}\limits_{p_{1\tau}} ~~~\hspace{-3mm}& &  \hspace{-2mm}\sum_{\tau=1}^T \frac{w_1+w_2\eta}{1+h_1p_{1\tau}} +w_2\bar{\eta}  \\
(\mathrm{P}_4)~~~\mathrm{subject~to} \quad \hspace{-3mm}&&\hspace{-2mm} \sum_{i=1}^\tau p_{1i} \leq \sum_{i=1}^\tau e_{1i}, \\
                        \hspace{-2mm}&&\hspace{-2mm}  \quad~  p_{k\tau} \geq 0,\quad   k=1,2,~\tau=1,\cdots,T.
\end{eqnarray*}

Since this is a  convex problem with causal energy constraint, its solution can be obtained using  directional water-filling \cite{Ulukus-2011-policy}.
    In essence, this strategy tries to allocate energy as even as possible throughout the transmission.
To be specific, the strategy divides the period of transmission into $K$  bands, where the $j$-th band starts from slot  $L_{j}+1$ and ends with slot $L_{j+1}$, i.e., $L_{j}+1\leq \tau \leq L_{j+1}, j=1,\cdots, K$.
    We denote $e(0)=0$, $e_1(\tau)=\sum_{i=1}^\tau e_{1i}$,  $L_0=0$, and $L_{K}=T$. Then the remaining $L_j$ is determined by
 \begin{equation} \label{eq:band_dtm}
   L_j=\arg \min\limits_{{L_{j-1}+1\leq \tau \leq T}} \frac{e(\tau)-e(L_{j-1})}{\tau-L_{j-1}}, ~\textrm{for}~ 1\leq j \leq K-1.
 \end{equation}

 In each band, the transmit power is the same for each slot,
 \begin{equation} \label{eq:pkt_dtm}
   p_{1\tau}^0 = \frac{e(L_j)-e(L_{j-1})}{L_j-L_{j-1}}, ~ L_{j-1}+1\leq \tau \leq L_j.
 \end{equation}
Using $\{p_{1\tau}^0\}$ as the initial power allocation, $(\mathrm{P}_3)$ can be solved 
iteratively (see Table \ref{Alg:iter_wt_fil}).
    To be specific, in the  $l$-th  iteration, we  solve the optimal $\{p_{2\tau}^l\}$  based on previous output $\{p_{1\tau}^{l-1}\}$,  and then solve $\{p_{1\tau}^l\}$  based on $\{p_{2\tau}^l\}$.

\begin{algorithm}[!t]
\algsetup{linenosize=\small}
\scriptsize
\caption{The iterative generalized backward-water-filling}
\begin{algorithmic}[1]\label{Alg:iter_wt_fil}
\REQUIRE~~\\
\STATE Set $l=1$, $\varepsilon = 10^{-6}$;
\STATE Set $p_{2\tau}^0=0$ for $\tau=1,\cdots,T$;
\STATE $\verb"/"*$\textit{Determine the initial transmit power}$*\verb"/"$
\STATE Set~~~ $L_0=0$, $L_{K}=T$.
\STATE  Solve $L_j$ for $j=1, \cdots,K-1$ using (\ref{eq:band_dtm});
\STATE  Solve $p_{1\tau}^0$ for $\tau=1,\cdots,T$ using (\ref{eq:pkt_dtm});

\ENSURE~~\\
\WHILE {$\Delta P>\varepsilon$}
\STATE $\verb"/"*$\textit{Solve $\{p_{2\tau}^l\}$ using $\{p_{1\tau}^{l-1}\}$, then $\{p_{1\tau}^l\}$ using $\{p_{2\tau}^l\}$}$*\verb"/"$
    \FOR {$k=2:-1:1$}
    \STATE Set $p_{kT}^l=e_{kT}$ and calculate water level $\zeta_{kT}(e_{kT})$;
        \FOR{$\tau =T-1 :-1: 1$}
            \STATE Set $p_{k\tau}^l=e_{k\tau}$ and calculate water level $\zeta_{k\tau}(e_{k\tau})$;
            \IF {$\zeta_{k\tau}(e_{k\tau})>\zeta_{k(\tau+1)}(p_{k(\tau+1)}^l)$}
                \STATE Find the number $m$ of slots where $\zeta_{k\tau}(e_{k\tau}) > \zeta_{ks}(p_{ks})$ for $s=\tau+1,\cdots,T$;
                \STATE Find the energy $\hat{e}_{k\tau}$ such that $\zeta_{k\tau}(\hat{e}_{k\tau}) = \zeta_{ks}(p_{k(\tau+1)}^l)$;
                \STATE Pour remaining energy $\check{e}_{k\tau}=e_{k\tau}-\hat{e}_{k\tau}$ over slots $[\tau,...,\tau+m]$;
                \STATE Update $p_{ks}^l$ and $\zeta_{ks}(p_{ks}^l)$ for $s=\tau,\cdots,T$;
              \ENDIF
        \ENDFOR
   \ENDFOR
\STATE  Calculate error $\Delta P=\sum_{k=1}^2\sum_{\tau=1}^T (|p_{k\tau}^l-p_{k\tau}^{l-1}|)$;
\STATE  Update iteration index $l=l+1$;
\ENDWHILE
\STATE \textbf{Output:} $\{p_{1\tau}^l, p_{2\tau}^l\}_{(\tau=1,\cdots,T)}$.
\end{algorithmic}
\end{algorithm}

If we differentiate Lagrangian \eqref{eq:lagrange}  with respect to  $r_{k\tau} $ and set it to zero, we have
\begin{eqnarray}
\label{eq:water_1}
        h_1 x\left (\lambda_{2\tau}\eta xy + \frac{\lambda_{3\tau}\eta xy} {\bar{\eta}+\eta xy} + \frac{\lambda_{3\tau}}{\ln 2} \right)
   \hspace{-2.75mm} &\leq & \hspace{-2.75mm} \frac{1}{\nu_{1\tau}},\\
\label{eq:water_2}
        h_2 y\left (\lambda_{2\tau}\bar{\eta}y+\lambda_{2\tau}\eta xy + \frac{\lambda_{3\tau}\eta xy} {\bar{\eta}+\eta xy} + \frac{\lambda_{3\tau}}{\ln 2} \right)
   \hspace{-2.75mm} &\leq & \hspace{-2.75mm} \frac{1}{\nu_{2\tau}}.
\end{eqnarray}

Define two generalized water levels \cite{Ulukus-2012-packet_mac, GWF-2006} $ \zeta_{1\tau}(p_{1\tau})$ and $\zeta_{2\tau}(p_{2\tau})$ as the inverse of the left-hand side of (\ref{eq:water_1}) and (\ref{eq:water_2}), respectively, and thus the two KKT conditions can be rewritten as
    \begin{eqnarray*}
    \zeta_{1\tau}(p_{1\tau})\hspace{-2.75mm} &\geq& \hspace{-2.75mm} {\nu}_{1\tau}, \\
     \zeta_{2\tau}(p_{2\tau})\hspace{-2.75mm} &\geq& \hspace{-2.75mm} {\nu}_{2\tau},
    \end{eqnarray*}
 for $\tau=1,\cdots,T$.

 Based on $\{p_{1\tau}^{l-1}\}$, the optimal power allocation for node 2 can be solved using the generalized backward water-filling process.
    We first pour the harvested energy $e_{2T}$ into the $T$-th slot.
 Hence, the transmit power would be $p_{2T}^l=e_{2T}$ and the water level $\zeta_{2T}^l(e_{2T})$ can be calculated based on $p_{2T}^l$ and $p_{1T}^{l-1}$.
    Next, we fill $e_{2(T-1)}$ over the $(T-1)$-th slot until the harvested energy $e_{T-1}$ is depleted or until the water level reaches $\zeta_{2T}^l(e_{2T})$.
 When the former case happens, $p_{2T}^l$ remains unchanged and $p_{2(T-1)}^l$ can be calculated directly.
    If the latter case happens, the remaining energy will be evenly filled over slots $[T-1, T]$.
        Afterwards,  the transmit power $p_{2(T-1)}^l$ can be calculated and transmit power $p_{2T}^l$ would be updated.
In addition, the water levels of the two slots can  also be updated accordingly.
    By repeating this process until the energy harvested in the first slot is filled,  the optimal $\{p_{2\tau}^l\}$ can be obtained.

 Likewise, the optimal $\{p_{1\tau}^l\}$  can be obtained based on $\{p_{2\tau}^l\}$.
    By repeating this process until the difference between the outputs of two adjacent iterations is negligible, i.e., the predefined threshold is reached, we will finally obtain the optimal power allocation for both nodes.

\section{Online Power Allocation} \label{sec:4online}
 {For the online case, only the distribution of the energy arrivals is known a priori, and thus the nodes cannot optimize the whole transmission process in advance.
    In order to minimize the weighted-sum distortion, each node needs to adjust its transmit power based on the energy status of the system in real-time.}
    Due to the stochastic nature of the energy harvesting process, the transmit power and the remaining energy of each node will also be random.
In this section, we investigate this causal case and analyze the  expected weighted-sum distortion.

\subsection{Problem Formulation}
    In this section, we normalize the harvested energy $e_{k\tau}$ and the transmit power $p_k$ using a constant $\delta$ and consider a set of discrete $e_{k\tau}$ and $p_k$, i.e., $e_{k\tau}\in \mathbb{Z}_{++}$ and $p_k\in \mathbb{Z}_{++}$.
It is clear that the quantized energy and power approach their original values when $\delta$ goes to zero.
    The normalized capacities of energy buffers are denoted as $L_1$ and $L_2$, respectively.
We say the system is in\textit{ energy state} $s$ if the remaining energy in the two buffers is $i=\lceil\frac{s}{L_2}\rceil$ and $j=s\,(\text{mod}\,L_2)+L_2 I_{s\, (\text{mod}\,L_2)=0}$, respectively, where $I_{A}$ is the indicator function (1 if $A$ is true and 0 otherwise).
    By denoting $L=L_1\times L_2$,  the energy state space would be $\Omega=[1,\cdots,L]$.

\begin{definition}
    An online \textit{power control function} $\rho$ is a mapping from the energy state space $\Omega$ to $\mathbb{Z}^2_{++}$.
        Give an energy state $s$,  $[\rho(s)]_k$ can be interpreted as the corresponding transmit power of node $k$, i.e.,  $p_1=[\rho(s)]_1$,  $p_2= [\rho(s)]_2$.
\end{definition}

Note that the a power control function $\rho$ is feasible only if the resulting transmit powers are positive integers and satisfy the energy constraint specified by energy state $s$.
    We denote the set of all feasible power control functions as
\begin{eqnarray*}\label{df:fsb_rho}
  \mathcal{F}_\rho \hspace{-3mm} &=& \hspace{-3mm}  \{ \rho | 1\leq \rho_1(s) \leq i, 1\leq \rho_2(s) \leq j,  \\
                                  \hspace{-3mm} && \hspace{-3mm} ~~~~~~~~~
                                  \rho_1(s) \in \mathbb{Z}_{++}, ~\rho_2(s) \in \mathbb{Z}_{++},  \forall s\in \Omega \}.
\end{eqnarray*}

Let $\rho_\tau$ be the control function for the $\tau$-th slot.
    The sequence $\varrho=\{\rho_1,\rho_2,\cdots\}$ of control functions is referred to as a \textit{power control policy}.
If the control function is the same for all slots, we call the policy a \textit{stationary power control policy}.
    In addition, we denote $\varrho_\tau=\{\rho_\tau,\rho_{\tau+1},\cdots\}$ for $\tau\geq2$ as a power control policy starting from slot $\tau$.

Let $S_\tau\in \Omega$ be the random energy state in slot $\tau$.
    Given $S_\tau=s$, we denote the corresponding minimum weighted-sum distortion under control function $\rho_\tau$ as $d_{\rho_\tau}(s)$.
That is,
\begin{equation} \label{df:d_rou_s}
    d_{\rho_\tau}(s)=D(\boldsymbol{p}),
\end{equation}
where $D(\boldsymbol{p})$ is defined in \eqref{df:w_dis} and  $\boldsymbol{p}=[p_1,p_2]=[[\rho(s)]_1,[\rho(s)]_2]$ is the transmit power of the two nodes under control function $\rho$.
    In addition, we denote $\boldsymbol{d}_{\rho_\tau}=[d_{\rho_\tau}(1),\cdots,d_{\rho_\tau}(L)]^{\textrm{T}}$ as the distortion vector under $\rho_\tau$.

Given the distribution of $e_{1\tau}$ and $e_{2\tau}$ and the power control function $\rho_\tau$, we denote the transfer probability from state $s$ to state $t$ as $q_{st}=\Pr\{i\rightarrow i'\}\Pr\{j\rightarrow j'\}$, where $i'=\lceil\frac{t}{L_2}\rceil$ and $j'=t\,(\text{mod}\, L_2)+L_2 I_{t\,(\text{mod}\, L_2)=0}$.
    We  denote the corresponding probability transfer matrix as  $\textbf{P}_{\rho_\tau}$.

In this section, we investigate the online power control policy that minimizes the expected weighted-sum distortion in  $(\mathrm{P}_2)$.
    Since the objective function of  $(\mathrm{P}_2)$ is not tractable due to the complexity of the distortion region, we shall propose a cost function $v_\varrho(s)$ to characterize the distortion in the next subsection.
 {As will be shown in Theorem} \ref{th:3cost_equal},  {the expectation of $v_\varrho(s)$ equals the expected weighted-sum distortion.
    Therefore, the cost function $v_\varrho(s)$ is a reasonable metric for the system distortion for the online case.
Furthermore, since the cost function is defined as the weighted sum of current cost and expected future cost, one can expect that $(\mathrm{P}_2)$ may be solved by some stationary and convergent iterative process.
    }

\subsection{Cost Function}
Given the initial energy state $S_0=s$, we define $v_{\varrho}(s)$ as the cost associated with energy state $s$ and policy $\varrho$.
\begin{definition}
    The cost $v_{\varrho}(s)$ is a mapping from  energy state space $\Omega$ to $\mathbb{Z}_{++}$.
        To be specific, $v_{\varrho}(s)$ is the weighted sum of current distortion and the expectation of future distortion,
    \begin{equation} \label{df:cost}
        v_{\varrho}(s)=\bar\alpha d_{\rho_1}(s) + \alpha\mathbb{E}(v_{\varrho_2}(t)),
    \end{equation}
    where $0<\alpha<1$ is a weighting coefficient, $\bar\alpha=1-\alpha$, and
    \begin{equation} \label{eq:expt_cost}
        \mathbb{E}(v_{\varrho_2}(t))=\sum_{t=1}^L q_{st} v_{\varrho_2}(t)
    \end{equation}
         is the expectation of future cost. In addition, $\mathbb{E}(\cdot)$ is the expectation operator with respect to the randomness of energy harvesting process.
\end{definition}

For a feasible power control policy $\varrho$, since the resulting transmit power $\boldsymbol{p_{\tau}}$ is positive, the corresponding distortion $D(\boldsymbol{p_{\tau}})$ would be finite.
    Thus, the cost $v_{\varrho_\tau}(S_\tau) $ is also finite for each slot.
 {Since the distribution of energy harvesting process is known to each node, the transfer probability $q_{st}$ from an energy state $s$ to another energy state $t$ can be readily calculated.
    However, the cost function $v_{\varrho}(s)$ given by} \eqref{df:cost}  {still cannot be calculated directly since the average future cost $\mathbb{E}(v_{\varrho_2}(t))$ is unknown.
Nevertheless, we will establish a tractable analytic framework based on cost function and develop an iterative algorithm} (see Table \ref{alg:online})  {using current cost only.
    To be specific, by using current cost as an estimation of expected future cost, the optimal power control function for current cost can be determined} (see \eqref{eq:T_explains}).
 {Using this power control function, current cost will be updated according to} \eqref{eq:T_explains}.
     {By repeating the process of solving for the power control function and then applying the power control function iteratively,  the current cost will eventually converge to the minimum system cost and the corresponding power control function would minimize the expected weighted-sum distortion of the system.}

We denote $\boldsymbol{v}_\varrho=[v_{\varrho}(1),\cdots,v_{\varrho}(L)]^{\text{T}}$ as the cost vector.
    It can be seen that $\boldsymbol{v}_\varrho$ can be expressed in the following matrix form:
\begin{eqnarray}\label{df:v_pi_0}
    \boldsymbol{v}_\varrho \hspace{-3mm} &=& \hspace{-3mm} \bar\alpha \boldsymbol{d}_{\rho_1}
                                                                            +\alpha \mathbb{E}(\boldsymbol{v}_{\varrho_2})
            =\bar\alpha \boldsymbol{d}_{\rho_1}
                                                                            +\alpha \textbf{P}_{\rho_1}\boldsymbol{v}_{\varrho_2} \\ \nonumber
            \hspace{-3mm} &=& \hspace{-3mm} \lim_{T\rightarrow\infty} \sum_{\tau=1}^T \bar\alpha \alpha^{\tau-1}
                                                                          \textbf{P}_{\rho_0} \cdots\textbf{P}_{\rho_{\tau-1}}\boldsymbol{d}_{\rho_{\tau}}
                                                                            +\alpha^T\textbf{P}_{\rho_{1}} \cdots\textbf{P}_{\rho_{T}}\boldsymbol{v}_{\varrho_{T+1}} ,
\end{eqnarray}
where $\textbf{P}_{\rho_{0}}$ is a unit matrix.

On one hand, if $\alpha$ is very small, we have $\boldsymbol{v}_\varrho \thickapprox \boldsymbol{d}_{\rho_1}$. In this case, the cost function focuses on current distortion and hence is minimized by the greedy power allocation policy.
    On the other hand, if $\alpha$ approaches unity, the cost function reduces to the expectation of future distortion, which is equal to the expected weighted-sum distortion associated with initial state vector $\boldsymbol{s}_0=[1,2,\cdots, L]$.

Also, note that the proposed cost function is different from the discounted cost model (without the item weighted by $\bar\alpha$) in the MDP theory in that the expected cost is equal to the expected distortion.
        The proposed cost function is also different  from the average cost without discounting of the MDP theory, which is much more difficult to deal with \cite{Pturmsn-2014-MDP}.

\begin{theorem} \label{th:3cost_equal}
    In a period of $T$ slots, for any power control policy $\varrho=\{\rho_1,\rho_2,\cdots, \rho_T\}$, we have
    \begin{equation*}\label{rt:prop_av_eqal}
               \mathbb{E}\left( \frac1T \sum_{\tau=1}^T v_{\varrho_\tau} (S_\tau) \right)
            = \mathbb{E}\left( \frac1T \sum_{\tau=1}^T  d_{\rho_\tau}(S_\tau) \right).
    \end{equation*}
\end{theorem}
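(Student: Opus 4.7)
The plan is to reduce the identity to the scalar Bellman-like recursion satisfied by the expected cost, telescope the sum over $\tau$, and then appeal to the periodic/stationary structure implicit in the online setup to kill the boundary term exactly. First, I would take a total expectation of the defining equation \eqref{df:cost}. Writing $V_\tau \triangleq \mathbb{E}(v_{\varrho_\tau}(S_\tau))$ and $D_\tau \triangleq \mathbb{E}(d_{\rho_\tau}(S_\tau))$, and using the tower property on the conditional expectation $\mathbb{E}(v_{\varrho_{\tau+1}}(S_{\tau+1})\mid S_\tau)$, one obtains the one-step recursion
\begin{equation*}
V_\tau \;=\; \bar\alpha\, D_\tau \;+\; \alpha\, V_{\tau+1}.
\end{equation*}
This is the sole probabilistic input; everything that follows is deterministic algebra.

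Next, I would sum the recursion over $\tau = 1,\ldots,T$. The index shift $\sum_{\tau=1}^T V_{\tau+1} = \sum_{\tau=1}^T V_\tau + V_{T+1} - V_1$ causes the sums of $V_\tau$ to collapse, leaving
\begin{equation*}
\bar\alpha \sum_{\tau=1}^T V_\tau \;=\; \bar\alpha \sum_{\tau=1}^T D_\tau \;+\; \alpha\,(V_{T+1} - V_1),
\end{equation*}
so that after dividing by $T\bar\alpha$ the theorem is equivalent to the exact boundary equality $V_{T+1} = V_1$. To get this, I would observe that using the infinite-horizon cost \eqref{df:v_pi_0} in conjunction with the length-$T$ policy $\varrho = \{\rho_1,\ldots,\rho_T\}$ tacitly requires extending $\varrho$ past slot $T$, and the natural extension -- matching the wording ``in a period of $T$ slots'' and consistent with the stationary online policies considered in Section~\ref{sec:4online} -- is periodic, $\rho_{\tau+T} = \rho_\tau$, with the law of $S_1$ chosen invariant under the $T$-step composite kernel $\mathbf{P}_{\rho_1}\cdots\mathbf{P}_{\rho_T}$ (in the stationary-policy special case $\rho_\tau \equiv \rho$ this is just the stationary distribution of $\mathbf{P}_{\rho}$). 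Under this invariance $D_{\tau+T} = D_\tau$ for every $\tau$, and iterating the recursion into its closed form $V_\tau = \bar\alpha \sum_{k \geq 0} \alpha^k D_{\tau+k}$ propagates the same periodicity to $V_\tau$, giving $V_{T+1} = V_1$ and thus the claimed equality exactly.

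The main obstacle is precisely this last step: forcing the telescoped residual $\alpha(V_{T+1}-V_1)/(T\bar\alpha)$ to be identically zero rather than merely $O(1/T)$. A naive bound -- using only that the per-slot distortion, and hence $V_\tau$, is uniformly bounded by $1$ in the unit-variance Gaussian setup -- would deliver only asymptotic equality of long-run averages, which is not what the theorem asserts. The finite-$T$ identity in the statement is accessible only once the periodic/stationary hypothesis above is invoked, and I therefore expect the author's proof either to state this invariance assumption explicitly or to work from the outset in the invariant regime natural to the online MDP formulation, the same regime that will underwrite the subsequent Banach fixed-point argument on the cost map.
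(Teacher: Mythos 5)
Your proposal is correct and follows essentially the same route as the paper: the tower property combined with the defining recursion of the cost, a telescoping sum over $\tau$, and a stationarity claim to eliminate the boundary term (the paper asserts $\mathbb{E}(v_{\varrho_0}(S_0))=\mathbb{E}(v_{\varrho_0}(S_T))$ under the same control function, which is precisely the invariance of the state law that you identify as the crux, and the paper likewise divides by $T$ and lets $T\to\infty$ at the end). Your version, working directly with the scalar expectations $V_\tau$ and $D_\tau$ rather than the random variables, is a cleaner and more explicit rendering of the same argument.
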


\begin{proof}
    See Appendix \ref{prf:3cost_equal}
\end{proof}

Note that $d_{\rho_\tau}(S_\tau)$ equals the weighted-sum distortion in the $\tau$-th slot (see \eqref{df:d_rou_s}).
    Thus, we can solve $(\mathrm{P}_2)$ by dealing with the expected cost instead.

\subsection{Minimum Expected Cost} \label{subsec:4c}
In this subsection, we solve the expected cost minimization problem and show that the optimal power control can be obtained by iteratively applying some simple function to an arbitrary non-zero initial cost vector.

We denote the minimum cost vector as
\begin{equation}\label{df:v_opt}
  \boldsymbol{v}^* = \inf_{\varrho} \boldsymbol{v}_\varrho.
\end{equation}
That is to say, starting from an initial energy state $s$, $v^*(s)$ is the smallest cost among all achievable costs.

A policy $\varrho^*$  is said to be $\alpha$-optimal if
\begin{equation*}\label{df:pi_opt}
  \boldsymbol{v}_{\varrho^*} = \boldsymbol{v}^*.
\end{equation*}
That is, the cost under policy $\varrho^*$ is the minimum cost $\boldsymbol{v}^*$.

For an $L$ dimensional vector $\boldsymbol{v}$, we define a mapping from $\mathbb{R}_{++}^L$ to $\mathbb{R}_{++}^L$:
\begin{eqnarray} \label{df:T}
   \mathbb{T} (\boldsymbol{v}) =\min_{\rho}\left\{\bar\alpha \boldsymbol{d}_\rho + \alpha \textbf{P}_\rho \boldsymbol{v}\right\}
\end{eqnarray}
where the minimization is performed for each element of $\boldsymbol{v} $.
    That is, $\mathbb{T}(\boldsymbol{v})(s)$ maps the $s$-th element $\boldsymbol{v}(s)$ to
    \begin{equation}\label{eq:T_explains}
        \mathbb{T}(\boldsymbol{v})(s) = \min_{\rho}\left\{\bar\alpha d_\rho(s) + \alpha \sum_{t=1}^{L} q_{st} v(t)\right\}.
    \end{equation}

Particularly, the mapping $\mathbb{T}(\boldsymbol{v})$ has the following property.
\begin{theorem} \label{prop:contract4}
    $\mathbb{T}(\boldsymbol{v})$ is a contraction mapping under the maximum norm $\Vert \cdot \Vert_\infty$.
        That is, for any $\boldsymbol{u}\in \mathbb{R}_{++}^L$ and $\boldsymbol{v}\in \mathbb{R}_{++}^L$, we have
        \begin{equation*}\label{rt:prop2}
            \Vert  \mathbb{T}(\boldsymbol{u})-  \mathbb{T}(\boldsymbol{v})\Vert_\infty
            \leq \beta \Vert \boldsymbol{u}-\boldsymbol{v}\Vert_\infty
        \end{equation*}
        for some constant $0<\beta<1$.
\end{theorem}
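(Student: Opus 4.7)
The plan is to exploit the standard Banach-style argument for Bellman operators: write the contraction bound componentwise using the minimizing controls, and observe that the stochastic matrix $\mathbf{P}_\rho$ averages differences, so the $\bar\alpha \boldsymbol{d}_\rho$ term cancels out once the same $\rho$ is used on both sides.

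First I would fix $\boldsymbol{u},\boldsymbol{v}\in\mathbb{R}_{++}^L$ and, for each state $s\in\Omega$, let $\rho_u(s)$ and $\rho_v(s)$ denote control functions achieving (or approaching, if the infimum is not attained exactly over $\mathcal{F}_\rho$) the minima in \eqref{eq:T_explains} for $\mathbb{T}(\boldsymbol{u})(s)$ and $\mathbb{T}(\boldsymbol{v})(s)$ respectively. By the minimization property, for any $s$ we have
\begin{equation*}
\mathbb{T}(\boldsymbol{u})(s)-\mathbb{T}(\boldsymbol{v})(s)\leq \bar\alpha d_{\rho_v}(s)+\alpha\sum_{t=1}^L q_{st}^{\rho_v} u(t) - \bar\alpha d_{\rho_v}(s)-\alpha\sum_{t=1}^L q_{st}^{\rho_v} v(t) = \alpha\sum_{t=1}^L q_{st}^{\rho_v}(u(t)-v(t)),
\end{equation*}
and symmetrically with $\rho_u$ in place of $\rho_v$. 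Since $\mathbf{P}_{\rho_v}$ is a stochastic matrix, its rows are non-negative and sum to unity, so the right-hand side is bounded above by $\alpha\Vert \boldsymbol{u}-\boldsymbol{v}\Vert_\infty$. Swapping roles and combining gives $|\mathbb{T}(\boldsymbol{u})(s)-\mathbb{T}(\boldsymbol{v})(s)|\leq \alpha\Vert \boldsymbol{u}-\boldsymbol{v}\Vert_\infty$ for every $s$, and taking the maximum over $s\in\Omega$ yields $\Vert \mathbb{T}(\boldsymbol{u})-\mathbb{T}(\boldsymbol{v})\Vert_\infty\leq \alpha\Vert \boldsymbol{u}-\boldsymbol{v}\Vert_\infty$. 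Setting $\beta=\alpha\in(0,1)$ completes the argument.

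The two small technicalities I would be careful about are (i) justifying that the minimum in \eqref{df:T} is attained, which follows because $\mathcal{F}_\rho$ is finite (the transmit powers take values in a bounded discrete set $\{1,\ldots,i\}\times\{1,\ldots,j\}$ for each state), so the minimization is over a finite set and a minimizer exists; and (ii) confirming that $\mathbf{P}_\rho$ is row-stochastic, which holds because the transition probabilities $q_{st}=\Pr\{i\to i'\}\Pr\{j\to j'\}$ are derived from the joint distribution of the energy arrivals of the two nodes and therefore sum to one over all successor states $t$. Neither of these is a serious obstacle; the main conceptual step is simply the componentwise comparison trick that lets the identical $\bar\alpha\boldsymbol{d}_{\rho_v}$ terms cancel.

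Once the contraction is established, Banach's fixed point theorem guarantees a unique fixed point $\boldsymbol{v}^\star$ of $\mathbb{T}$, and by \eqref{df:v_pi_0} this fixed point coincides with the minimum cost vector $\boldsymbol{v}^*$ defined in \eqref{df:v_opt}. This will be the foundation for the iterative algorithm alluded to around equation \eqref{eq:T_explains}: repeatedly applying $\mathbb{T}$ to any initial $\boldsymbol{v}^{(0)}\in\mathbb{R}_{++}^L$ produces $\boldsymbol{v}^{(n)}\to\boldsymbol{v}^*$ with geometric rate $\alpha^n$, which is exactly the convergence claim advertised in the abstract and in Section~\ref{sec:4online}.
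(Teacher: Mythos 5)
Your proof is correct and follows essentially the same route as the paper: the paper performs the same comparison in vectorized form, adding and subtracting $\alpha\textbf{P}_\rho\boldsymbol{v}$ inside the minimization and using row-stochasticity of $\textbf{P}_\rho$ to bound $\alpha\textbf{P}_\rho(\boldsymbol{u}-\boldsymbol{v})$ by $\alpha\Vert\boldsymbol{u}-\boldsymbol{v}\Vert_\infty\textbf{1}$, then symmetrizes, arriving at the same constant $\beta=\alpha$. Your componentwise version with explicit (near-)minimizers is just a rewriting of the same standard Bellman-operator contraction argument, so there is nothing to add.
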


\begin{proof}
    See Appendix \ref{prf:contract4}.
\end{proof}

Moreover, the convergence of applying a contraction mapping iteratively is guaranteed by the following theorem \cite{Banach-1922}.
\begin{theorem} \label{th:fixpoint}
    (\textit{Banach's Fixed Point Theorem}) Let $(X, d)$ be a non-empty complete metric space with a contraction mapping $\mathbb{T}:X\rightarrow X$.
        Then $\mathbb{T}$ admits a unique fixed-point $x^*$ in $X$  (i.e. $\mathbb{T}(x^*)=x^*$).
    Furthermore,  $x^*$ can be found as follows: start with an arbitrary element  $x_0\in X$ and define a sequence ${x_n}$ by $x_n=\mathbb{T}(x_{n-1})$, then $x_n\rightarrow x^*$.
\end{theorem}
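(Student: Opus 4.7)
The plan is to carry out the classical Picard iteration argument. Starting from an arbitrary $x_0 \in X$, I would define the orbit $x_n = \mathbb{T}(x_{n-1})$ and proceed in three steps: first establish that $\{x_n\}$ is Cauchy, then extract a limit $x^*$ using completeness, then verify $x^*$ is a fixed point via continuity of $\mathbb{T}$, and finally show uniqueness from the contraction inequality itself.

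For the Cauchy estimate, the contraction property gives $d(x_{n+1}, x_n) = d(\mathbb{T}(x_n), \mathbb{T}(x_{n-1})) \leq \beta\, d(x_n, x_{n-1})$, so by induction $d(x_{n+1}, x_n) \leq \beta^n d(x_1, x_0)$. For any $m > n$, the triangle inequality and a geometric-series bound yield
\begin{equation*}
  d(x_m, x_n) \leq \sum_{k=n}^{m-1} d(x_{k+1}, x_k) \leq d(x_1, x_0)\sum_{k=n}^{m-1} \beta^k \leq \frac{\beta^n}{1-\beta}\, d(x_1, x_0),
\end{equation*}
which tends to zero as $n \to \infty$ because $\beta \in (0,1)$. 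Hence $\{x_n\}$ is Cauchy, and by completeness of $(X,d)$ it converges to some $x^* \in X$.

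To show $x^*$ is a fixed point, I would use that any contraction is Lipschitz continuous, so $\mathbb{T}(x^*) = \mathbb{T}(\lim_n x_n) = \lim_n \mathbb{T}(x_n) = \lim_n x_{n+1} = x^*$. For uniqueness, suppose $y^* \in X$ also satisfies $\mathbb{T}(y^*) = y^*$. Then
\begin{equation*}
  d(x^*, y^*) = d(\mathbb{T}(x^*), \mathbb{T}(y^*)) \leq \beta\, d(x^*, y^*),
\end{equation*}
so $(1-\beta)\, d(x^*,y^*) \leq 0$, which forces $d(x^*, y^*) = 0$ and hence $x^* = y^*$. The convergence claim $x_n \to x^*$ is then immediate from the construction.

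There is no real obstacle here, since this is the textbook proof of Banach's theorem; in fact the earlier quoted statement is from Banach's 1922 paper and can simply be cited. The only care points are (i) writing the geometric-series bound so it is uniform in $m$, and (ii) explicitly invoking continuity of $\mathbb{T}$ when passing to the limit inside $\mathbb{T}(\cdot)$. Within the paper's usage, the space will be $X = \mathbb{R}_{++}^{L}$ equipped with $\Vert \cdot \Vert_\infty$, which is complete, and $\mathbb{T}$ is the contraction established in Theorem~\ref{prop:contract4}, so the hypotheses are met and the iterates $\boldsymbol{v}_n = \mathbb{T}(\boldsymbol{v}_{n-1})$ converge to the unique fixed point, which by \eqref{df:T} is precisely the minimum cost vector $\boldsymbol{v}^*$.
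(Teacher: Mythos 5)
Your proof is correct: it is the classical Picard-iteration argument (geometric-series Cauchy estimate, completeness, continuity of the contraction, and uniqueness from the contraction inequality), and all steps are sound. Note that the paper itself offers no proof of this statement — it is quoted as a known result with a citation to Banach's 1922 paper, exactly as you observe — so your write-up simply supplies the standard argument behind the cited theorem, and your remark about how it is instantiated in the paper (with $X=\mathbb{R}_{++}^{L}$ under $\Vert\cdot\Vert_\infty$ and the contraction $\mathbb{T}$ from Theorem~\ref{prop:contract4}) is consistent with how the paper uses it in Theorem~\ref{th:main4}.
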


Based on our previous analysis and Banach's fixed point theorem, we have the following theorem on the minimum cost vector $\boldsymbol{v}^*$.
\begin{theorem} \label{th:main4}
    For the minimum cost vector $\boldsymbol{v}^*$, following properties hold ture:
    \begin{enumerate}
     \item $\boldsymbol{v}^*$ is the fixed point of mapping $\mathbb{T}(\boldsymbol{v})$, i.e., $\mathbb{T}(\boldsymbol{v}^*)= \boldsymbol{v}^*$;
     \item for any positive $\boldsymbol{v}_0\in \mathbb{R}_{++}^{L}$, $\lim_{T\rightarrow\infty}\mathbb{T}^T(\boldsymbol{v}_0) =\boldsymbol{v}^*$.
     \end{enumerate}
\end{theorem}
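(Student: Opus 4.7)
The plan is to combine the contraction property from Theorem~\ref{prop:contract4} with Banach's fixed point theorem, and then identify the unique fixed point of $\mathbb{T}$ with the minimum cost vector $\boldsymbol{v}^*$. Since the space of bounded cost vectors under the maximum norm is a complete metric space and $\mathbb{T}$ is a contraction, Banach's theorem immediately delivers a unique fixed point, call it $\boldsymbol{v}^{\circ}$, together with the global convergence statement $\mathbb{T}^T(\boldsymbol{v}_0)\to \boldsymbol{v}^{\circ}$ for any $\boldsymbol{v}_0\in\mathbb{R}_{++}^L$. The whole theorem therefore reduces to proving the single identity $\boldsymbol{v}^{\circ} = \boldsymbol{v}^*$, which I would split into the two directions $\boldsymbol{v}^{\circ}\le \boldsymbol{v}^*$ and $\boldsymbol{v}^{\circ}\ge \boldsymbol{v}^*$.

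For $\boldsymbol{v}^{\circ}\le \boldsymbol{v}^*$, I would fix an arbitrary feasible policy $\varrho=\{\rho_1,\rho_2,\dots\}$ and use that the minimization in \eqref{df:T} gives $\boldsymbol{v}^{\circ}=\mathbb{T}(\boldsymbol{v}^{\circ})\le \bar\alpha\boldsymbol{d}_{\rho_\tau}+\alpha\textbf{P}_{\rho_\tau}\boldsymbol{v}^{\circ}$ for every admissible $\rho_\tau$. Substituting this bound into itself by peeling off one slot at a time yields, after $T$ iterations, the componentwise inequality
\begin{equation*}
\boldsymbol{v}^{\circ} \le \sum_{\tau=1}^T \bar\alpha\,\alpha^{\tau-1}\textbf{P}_{\rho_1}\cdots\textbf{P}_{\rho_{\tau-1}}\boldsymbol{d}_{\rho_\tau} + \alpha^T\textbf{P}_{\rho_1}\cdots\textbf{P}_{\rho_T}\boldsymbol{v}^{\circ}.
\end{equation*}
As $T\to\infty$, the right-hand side converges to $\boldsymbol{v}_\varrho$ by the unrolled representation in \eqref{df:v_pi_0}; the residual term vanishes since each $\textbf{P}_{\rho_\tau}$ is stochastic, $\boldsymbol{v}^{\circ}$ is bounded, and $\alpha^T\to 0$. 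Taking the infimum over $\varrho$ then gives $\boldsymbol{v}^{\circ}\le \boldsymbol{v}^*$.

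For the reverse direction I would construct a stationary policy that realizes $\boldsymbol{v}^{\circ}$. Let $\rho^{\circ}$ be a power control function that attains the minimum in \eqref{eq:T_explains} for every state $s$; such a $\rho^{\circ}$ exists because $\mathcal{F}_\rho$ is finite under the integer-valued discretization of energy and transmit power. By construction $\boldsymbol{v}^{\circ}=\bar\alpha\boldsymbol{d}_{\rho^{\circ}}+\alpha\textbf{P}_{\rho^{\circ}}\boldsymbol{v}^{\circ}$, while the stationary policy $\varrho^{\circ}=\{\rho^{\circ},\rho^{\circ},\dots\}$ produces a cost vector satisfying the identical fixed-point equation $\boldsymbol{v}_{\varrho^{\circ}}=\bar\alpha\boldsymbol{d}_{\rho^{\circ}}+\alpha\textbf{P}_{\rho^{\circ}}\boldsymbol{v}_{\varrho^{\circ}}$ by \eqref{df:v_pi_0}. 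The affine map $\boldsymbol{u}\mapsto \bar\alpha\boldsymbol{d}_{\rho^{\circ}}+\alpha\textbf{P}_{\rho^{\circ}}\boldsymbol{u}$ is itself a contraction on $\mathbb{R}^L$ with factor $\alpha<1$, so it has a unique fixed point, forcing $\boldsymbol{v}_{\varrho^{\circ}}=\boldsymbol{v}^{\circ}$. Hence $\boldsymbol{v}^*\le \boldsymbol{v}_{\varrho^{\circ}}=\boldsymbol{v}^{\circ}$, completing the identification $\boldsymbol{v}^{\circ}=\boldsymbol{v}^*$. Claim~(2) then follows directly from the convergence clause of Banach's theorem applied to the already-established fixed point $\boldsymbol{v}^*$.

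The main technical obstacle I anticipate is the legitimacy of the stationary-policy construction: I must explicitly invoke the finiteness of $\mathcal{F}_\rho$ (inherited from the integer discretization introduced before \eqref{df:fsb_rho}) to guarantee that a state-wise minimizer $\rho^{\circ}$ actually exists, rather than only an infimizing sequence. A smaller bookkeeping point is to make sure the inequality peeling argument preserves componentwise order at each step, which is immediate because each $\textbf{P}_{\rho_\tau}$ has non-negative entries, but it should be stated once.
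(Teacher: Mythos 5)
Your proof is correct and follows essentially the same route as the paper's: invoke the contraction property (Theorem~\ref{prop:contract4}) and Banach's theorem to obtain a unique fixed point, then sandwich it against $\boldsymbol{v}^*$ by (i) unrolling the inequality $\mathbb{T}(\boldsymbol{v})\le \bar\alpha\boldsymbol{d}_{\rho_\tau}+\alpha\textbf{P}_{\rho_\tau}\boldsymbol{v}$ along an arbitrary policy and letting the $\alpha^T$ residual vanish, and (ii) constructing a stationary policy from a state-wise minimizer of \eqref{eq:T_explains}. If anything, your write-up is slightly tighter at two points where the paper is loose: you take the infimum over all feasible policies rather than assuming an optimal policy attains it, and you identify $\boldsymbol{v}_{\varrho^{\circ}}=\boldsymbol{v}^{\circ}$ via uniqueness of the fixed point of the affine contraction $\mathbb{T}_{\rho^{\circ}}$ (noting existence of the minimizer from the finiteness of $\mathcal{F}_\rho$) instead of asserting that repeatedly applying $\rho^{\circ}$ is ``equivalent to applying $\mathbb{T}$''.
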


\begin{proof}
    See Appendix \ref{prf:main4}.
\end{proof}

Therefore, the minimum cost vector $\boldsymbol{v}^*$ can be obtained by simply applying $\mathbb{T}(\boldsymbol{v})$ to an arbitrary positive initial vector $\boldsymbol{v}_0$ iteratively.
    This also means that the $\alpha$-optimal power control policy can be chosen as a \textit{stationary policy} $\varrho^*=\{\rho^*,\rho^*,\cdots\}$, where $\rho$ is solved from $\mathbb{T}_{\rho^*}(\boldsymbol{v}^*)= \boldsymbol{v}^*$ (see \eqref{df:T}).
The algorithm is summarized in  Table \ref{alg:online}.

By using the algorithm in Table 2, we can find the optimal online power control without directly calculating either the information theoretic distortion or the cost function (see \eqref{df:cost}).
    Instead, we start from an arbitrary non-negative initial cost vector and solve the optimal power control function for the current cost, and then simply repeat this operation until the output cost of two adjacent iterations is negligible.
According to Theorem \ref{th:main4}, the obtained power control function will minimize the expected cost, which is equal to the expected weighted distortion.
    By measuring its own remaining energy and inquiring the remaining energy of the other node in each slot, each node can determine its transmit power in real-time.
Also, we note that the proposed cost function based approach can also be applied to the online scheduling for other networks.

\begin{algorithm}[!t]
\algsetup{linenosize=\small}
\scriptsize
\caption{The iterative solution for online power control}
\begin{algorithmic}[1]\label{alg:online}
\REQUIRE~~\\
\STATE Set $l=1$, $\varepsilon = 10^{-3}$;
\STATE Set $\boldsymbol{v}^0=\textbf{0}$;

\ENSURE~~\\
\WHILE {$\Delta \boldsymbol{v}>\varepsilon$}
\FOR {$s=1:1:L$}
        \STATE Search transmit power $p_1$, $p_2$ (i.e., optimal policy $\rho^l$ for state $s$) using \eqref{eq:T_explains};
    \ENDFOR
    \STATE Calculate distortion $\boldsymbol{d}_{\rho^l}$ using \eqref{rt:D_r} and probability transfer matrix $\textbf{P}_{\rho^l}$;
    \STATE Update $\boldsymbol{v}^l$ using \eqref{df:cost};
\STATE  Calculate error $\Delta \boldsymbol{v}=\sum_{s=1}^L |\boldsymbol{v}^l(s)-\boldsymbol{v}^{l-1}(s)|$;
\STATE  Update iteration index $l=l+1$;
\ENDWHILE
\STATE \textbf{Output:} $\boldsymbol{v}^*=\boldsymbol{v}^l, \rho^*=\rho^l$.
\end{algorithmic}
\end{algorithm}

After $\rho^*$ has been obtained, one can calculate the corresponding probability transfer matrix $\textbf{P}^*$ accordingly.
    Thus, the stationary distribution $\boldsymbol{\pi}$ of the energy states is \cite{Fan-SCT}
\begin{equation} \label{rt:pi_stationary}
    \boldsymbol{\pi}^*= \textbf{1}(\textbf{I}-\textbf{P}^*+\Theta)^{-1},
\end{equation}
where $\textbf{1}_{1\times L}$ is a vector of ones and $\Theta_{L\times L}$ is a matrix of ones.


For the stationary power control policy $\varrho^*$, we have
    \begin{eqnarray}
    \label{dr:final_rt_1}
    \nonumber
           \hspace{-7mm} &&\mathbb{E}\left( \lim_{T\rightarrow\infty}\frac1T \sum_{\tau=1}^T v_{\varrho^*} (S_\tau) \right) = \mathbb{E}\left( v_{\varrho^*} (S_\tau) \right)
           = \sum_{s=1}^L \pi_{s}^* v_{\varrho^*}(s) .
    \end{eqnarray}

According to Theorem  \ref{th:3cost_equal}, we have $\mathbb{E}\left( \frac1T \sum_{\tau=1}^T v_{\varrho^*} (S_\tau) \right) =\mathbb{E}\left( \frac1T \sum_{\tau=1}^T  d_{\rho_\tau}(S_\tau) \right)$, and hence
       \begin{eqnarray}
    \mathbb{E}\left( \lim_{T\rightarrow\infty}\frac1T \sum_{\tau=1}^T  d_{\rho_\tau}(S_\tau) \right)=\sum_{s=1}^L \pi_{s}^* v_{\varrho^*}(s),
    \end{eqnarray}
   which would be the solution to the online power control problem $(\mathrm{P}_2)$ according to Theorem \ref{th:main4}.

\section{Numerical Results}\label{sec5:simulation}
In our simulations, we assume that the correlation coefficient between the samples of the two nodes is $\sqrt{\eta}=0.8367$ (i.e., $\eta=0.7$).
    The weighting coefficients for the distortions are $w_1=0.3$ and $w_2=0.7$.
The channel gain between the two nodes and the fusion center are $h_1=0.8$ and $h_2=0.5$, respectively.
    For simplicity, we set the slot length to $T_s=1$ s and the system bandwidth to $W=1$ Hz.

\subsection{Offline Power Control}
For the offline power control, we consider $T=10$ slots of transmissions.
    Assuming that the harvested energy at node 1 and node 2 are both integer random variables drawn uniformly from $\{1, 2, \cdots, 10\}$, we consider the following realization of energy arrivals in Joule (J):
    \begin{eqnarray*}
      \{e_{1\tau}\}\hspace{-2.75mm} &=& \hspace{-2.75mm} [5,     6,     2 ,    4,     9,     2,    10,     8,     6 ,    7], \\
      \{e_{2\tau}\}\hspace{-2.75mm} &=& \hspace{-2.75mm} [5,    10,     2,     9,    10,     9,     2,     4,     5,     9].
    \end{eqnarray*}

\begin{figure}[!t]

\hspace{-6 mm}
    \begin{tabular}{cc}
    \subfigure[Power allocations optimized for each single user]
    {
    \begin{minipage}[t]{0.5\textwidth}
    \centering
    {\includegraphics[width = 3.7in] {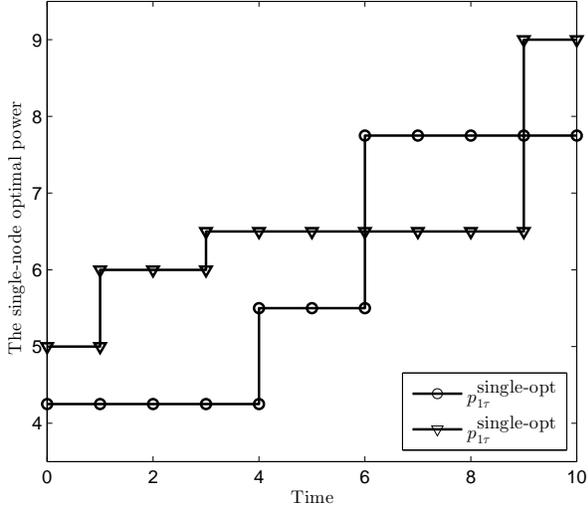} \label{fig:p120}}
    \end{minipage}
    }\\

    \hspace{0.01\textwidth}
    \subfigure[Power allocation by iterative generalized-backward water-filling]
    {
    \begin{minipage}[t]{0.5\textwidth}
    \centering
    {\includegraphics[width = 3.7in] {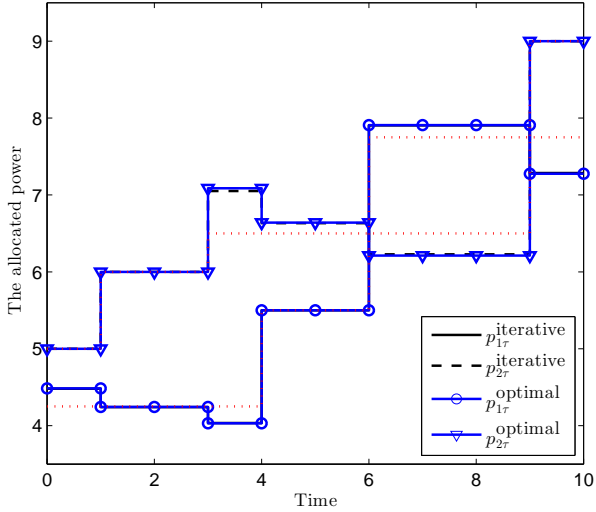} \label{fig:p12k}}
    \end{minipage}
    }
    \end{tabular}

\caption{The optimal offline power control.} \label{fig:power}
\end{figure}

 The optimal power allocation $\{p_{kt}^{\textrm{single-opt}}\}$ for a single user, which are obtained by solving $(\mathrm{P}_3)$, are presented in Fig. \ref{fig:p120}.
    We observe that the transmit power of each node is constant within each band and it increases when the band is changing.
In particular, the energy buffer will be depleted in the last slot of each band, following by an increase in transmit power in the next band.
    It is worth noting that although more energy is harvested in the next band,  the newly harvested energy cannot help the transmission in previous bands due to the causality constraint.

The optimal offline power allocation $\{p_{kt}^{\textrm{iterative}}\}$  for node 1 (the solid curve) and node 2 (the dashed curve), which are obtained by the iterative generalized backward-water-filling algorithm (see Table \ref{Alg:iter_wt_fil}), are presented in Figure \ref{fig:p12k}.
    The power allocation $\{p_{kt}^{\textrm{single-opt}}\}$ is also plotted for reference (the dotted curves).
From Fig. \ref{fig:p12k},  we observe that during a period with constant $p_{kt}^{\textrm{single-opt}}$, $p_{kt}^{\textrm{iterative}}$ is decreasing and $p_{\tilde{k}t}^{\textrm{iterative}}$ is increasing.
    Intuitively, this result makes sense because in each slot where $p_{\tilde{k}t}^{\textrm{single-opt}}$ is increased,\footnote{This actually occurs when the energy buffer of node $\tilde{k}$ is emptied and $\nu_{\tilde{k}t}$ is increased.} $p_{kt}^{\textrm{iterative}}$ needs to be decreased so that water level $\zeta_{kt}$ can be constant in this period.
Note that when $p_{kt}^{\textrm{single-opt}}$ is constant, its influence to the other node $\tilde{k}$ does not change. Therefore, the power of node $\tilde{k}$ can be optimized as if it is in a single node system.
    These observations validates the results in Theorem \ref{th:p_structure}.
The optimal power allocations of the two nodes obtained by Matlab optimization solver are also shown by curves marked by circles and triangles, respectively.
    It can be seen that the results match well with the solution obtained by the iterative generalized backward-water-filling algorithm.  

\begin{figure}[!t]

\hspace{-6 mm}
    \begin{tabular}{cc}
    \subfigure[The transmit power of node 1]
    {
    \begin{minipage}[t]{0.5\textwidth}
    \centering
    {\includegraphics[width = 3.7in] {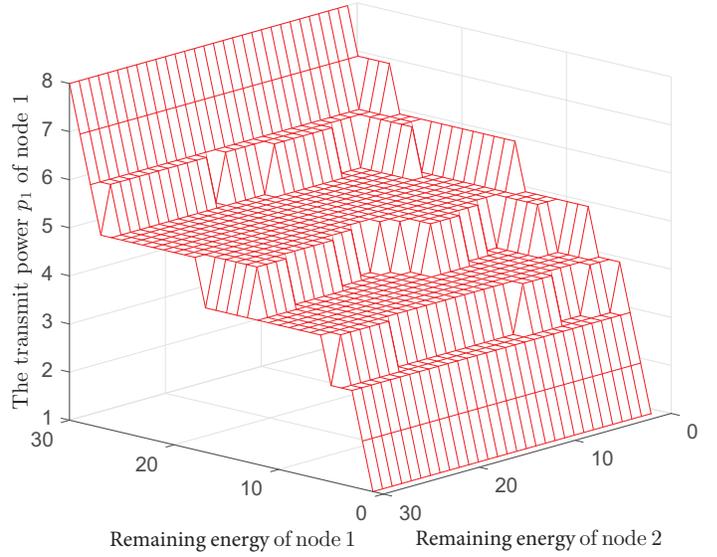} \label{fig:p1}}
    \end{minipage}
    }\\

 \subfigure[The transmit power of node 2]
    {
    \begin{minipage}[t]{0.5\textwidth}
    \centering
    {\includegraphics[width = 3.7in] {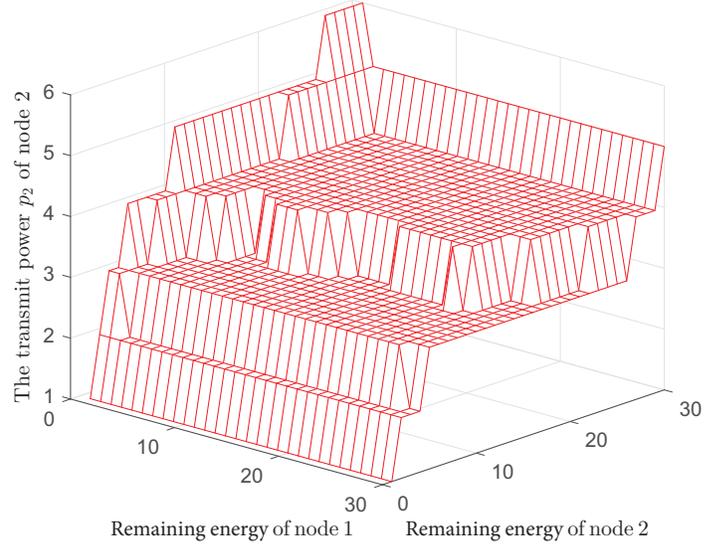} \label{fig:p2}}
    \end{minipage}
    }

    \end{tabular}

\caption{The optimal online power control.} \label{fig:online_power}
\end{figure}

\subsection{Online Power Control}
For the online case, we set the weighting coefficient of the cost function (see \eqref{df:cost}) to $\alpha=0.99$ and the normalizing constant to $\delta=1$.
    The sizes of the energy buffer of the two nodes are $L_1=30$ and $L_2=30$, respectively.
Thus, we have $L=L_1L_2=900$ and the energy state space is $\Omega=\{1,\cdots, 900\}$.
    We assume that for both nodes, the harvested energy in one slot is a uniformly distributed integer between one and $e_{k,\max}$, where $e_{1,\max}=8$ and $e_{2,\max}=5$.
We also assume that the energy harvesting processes of the two nodes are independent from each other.

For two energy states $s$ and $t$, we denote the corresponding remaining energy pair as $(i,j)$ and $(i', j')$, respectively.
   Under control function $\rho$ and starting from state $s$, we assume that the transmit power of the two nodes is $p_1$ and $p_2$, respectively.
For the given energy states $s, t$ and transmit powers $p_1, p_2$, the uncertainty in transferring from state $s$ to state $t$ is due to the randomness of the energy harvesting process of the two nodes, which is independent from each other.
    Thus, the transfer probability would be $q_{st}=\Pr\{i\rightarrow i'\}\Pr\{j\rightarrow j'\}$.
    Note that $\Pr\{i\rightarrow i'\}=0$ if $i'<i+1-p_1~\text{or}~i'>+e_{1,\max}-p_1$.
Moreover, $\Pr\{i\rightarrow i'\}=\frac{1}{e_{1,\max}}$ if $i'<L$ and $\Pr\{i\rightarrow i'\}=\frac{i+e_{1,\max}-p_1-L}{e_{1,\max}}$ if $i'=L$.
    By performing a similar analysis on $\Pr\{j\rightarrow j'\}$, the transfer probability $q_{st}$ and the transfer matrix $\textbf{P}_\rho$ can be obtained.

Following Algorithm \ref{alg:online}, we obtain the optimal control function of each node, which specifies the transmit power of each node for each energy state (corresponds to the remaining energy of nodes), as shown in Fig. \ref{fig:p1} and Fig. \ref{fig:p2}.
    We can observe that the transmit power of a node depends mainly on its own remaining energy and is not much affected by the remaining energy of the other node.
In general, $p_k$ is an increasing function of its  remaining energy. However, it is neither convex nor monotonically  increasing with the remaining energy of the other node.
    Note that $p_k$ is obtained by jointly optimizing the cost over $p_k$ and $p_{\tilde{k}}$. Thus, the optimal $p_k$  also has a generalized water-filling interpretation like the offline power control (see Fig. \ref{fig:p12k}).

\begin{figure*}[htp]

\hspace{-6 mm}
    \begin{tabular}{cc}
    \subfigure[The stationary distribution of energy state]
    {
    \begin{minipage}[t]{0.5\textwidth}
    \centering
    {\includegraphics[width = 3.7in] {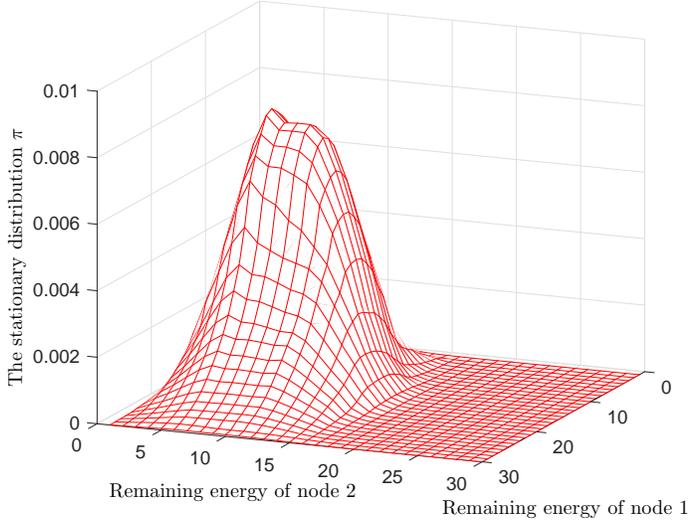} \label{fig:pi_star}}
    \end{minipage}
    }

 \subfigure[The minimum cost versus energy state]
    {
    \begin{minipage}[t]{0.5\textwidth}
    \centering
    {\includegraphics[width = 3.7in] {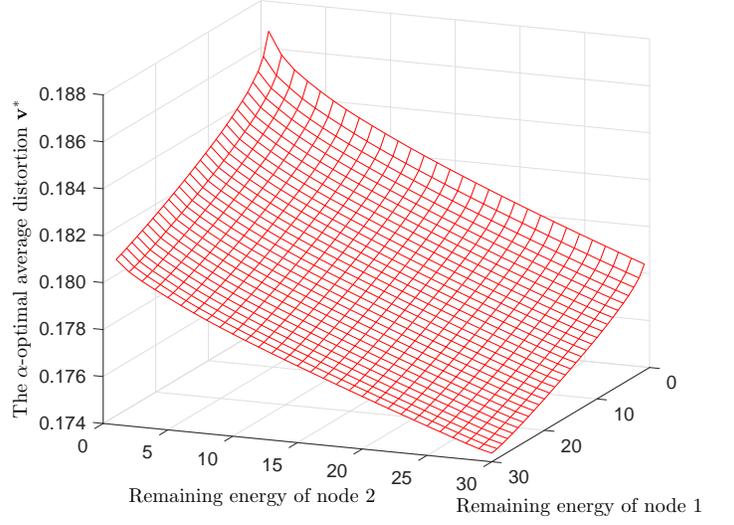} \label{fig:v_star}}
    \end{minipage}
    }\\

    \hspace{0.01\textwidth}
    \subfigure[Convergence of searching $\boldsymbol{v}^*$ iteratively.]
    {
    \begin{minipage}[t]{0.5\textwidth}
    \centering
    {\includegraphics[width = 3.7in] {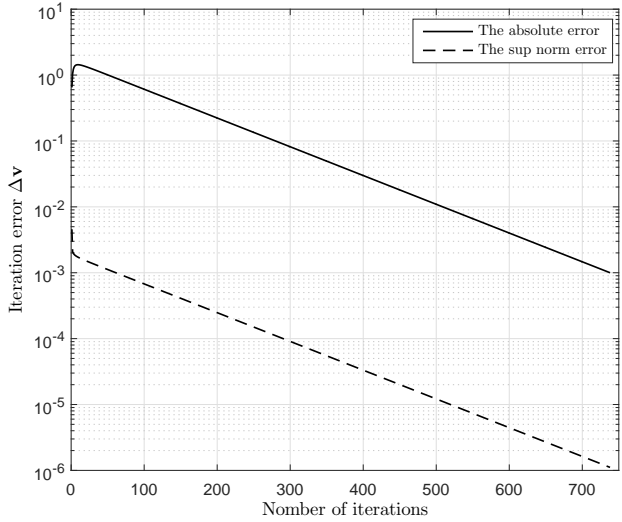} \label{fig:converg}}
    \end{minipage}
    }

    \hspace{0.01\textwidth}
    \subfigure[The minimum expected distortion versus correlation coefficient $\sqrt{\eta}$]
    {
    \begin{minipage}[t]{0.5\textwidth}
    \centering
    {\includegraphics[width = 3.7in] {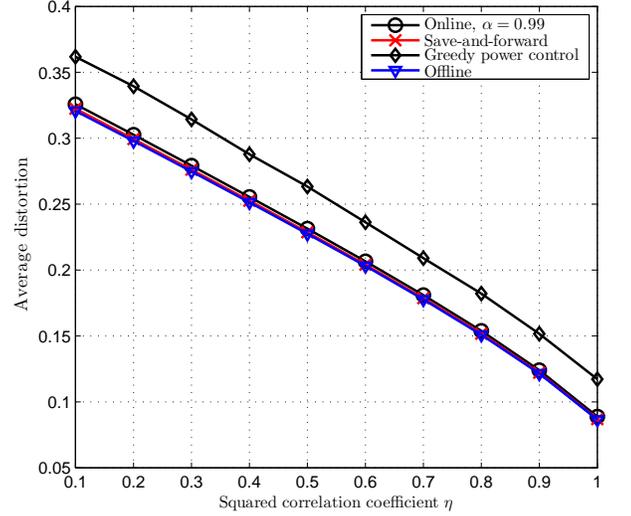} \label{fig:d_rho}}
    \end{minipage}
    }

    \end{tabular}

\caption{On the online power allocation policy.} \label{fig:online_policy}
\end{figure*}

We plot the stationary distribution of the energy state of the system in Fig. \ref{fig:pi_star}.
    As observed in the figure, the probability that the two nodes have much remaining energy is close to zero.
Therefore, under the optimal power control, the energy buffers of the two nodes are stable.
    This also implies that we do not need very large energy buffers in practical energy harvesting systems.
Fig. \ref{fig:v_star} depicts the minimum achievable cost for different energy states.
    As expected, the cost decreases with the remaining energy of nodes.
However, the cost becomes unaffected by the remaining energy of the two nodes when they are very large.
    This is because, even when the buffers are full, the corresponding transmit powers are not very large, as shown in Fig. \ref{fig:p1}.

Fig. \ref{fig:converg} displays the convergence of Algorithm \ref{alg:online}.
    Both the absolute error $\Delta_{|\cdot|} \boldsymbol{v}=\sum_{s=1}^L |\boldsymbol{v}^l(s)-\boldsymbol{v}^{l-1}(s)|$ and the sup norm error $\Delta_{\max} \boldsymbol{v}=\sup_{s} |\boldsymbol{v}^l(s)-\boldsymbol{v}^{l-1}(s)|$ are presented.
 It is seen that the error decreases geometrically, demonstrating the effectiveness of Algorithm \ref{alg:online}.

 We then investigate how the minimum expected distortion changes with the correlation between the two nodes in Fig. \ref{fig:d_rho}.
    In particular, we investigate the performance of the following four schemes:
 1) the online power allocation based on Algorithm~\ref{alg:online};
 2) the offline power allocation based on Algorithm \ref{Alg:iter_wt_fil};
 3) the greedy power allocation where each node uses all the harvested energy in each slot;
 4) the save-and-forward power allocation where each node saves all the harvested energy for a long period of $h(T)=o(T)$ slots and transmits information in the rest of the period \cite{Ulukus-2012-awgn}.
    It is clear that the greedy policy is the most straightforward scheme but it performs the worst.
 On the contrary, the offline policy serves as a strict upper bound of the achievable performance due to the non-causal information about the energy harvesting process.
    Furthermore, the save-and-forward policy has been shown to be the performance limit achieving policy \cite{Ulukus-2012-awgn}.
 As observed in Fig. \ref{fig:d_rho}, our online policy largely outperforms the greedy policy, and performs similar to the offline policy and the save-and-forward policy.

\begin{figure}[!t]
\centering
\includegraphics[width=3.7in]{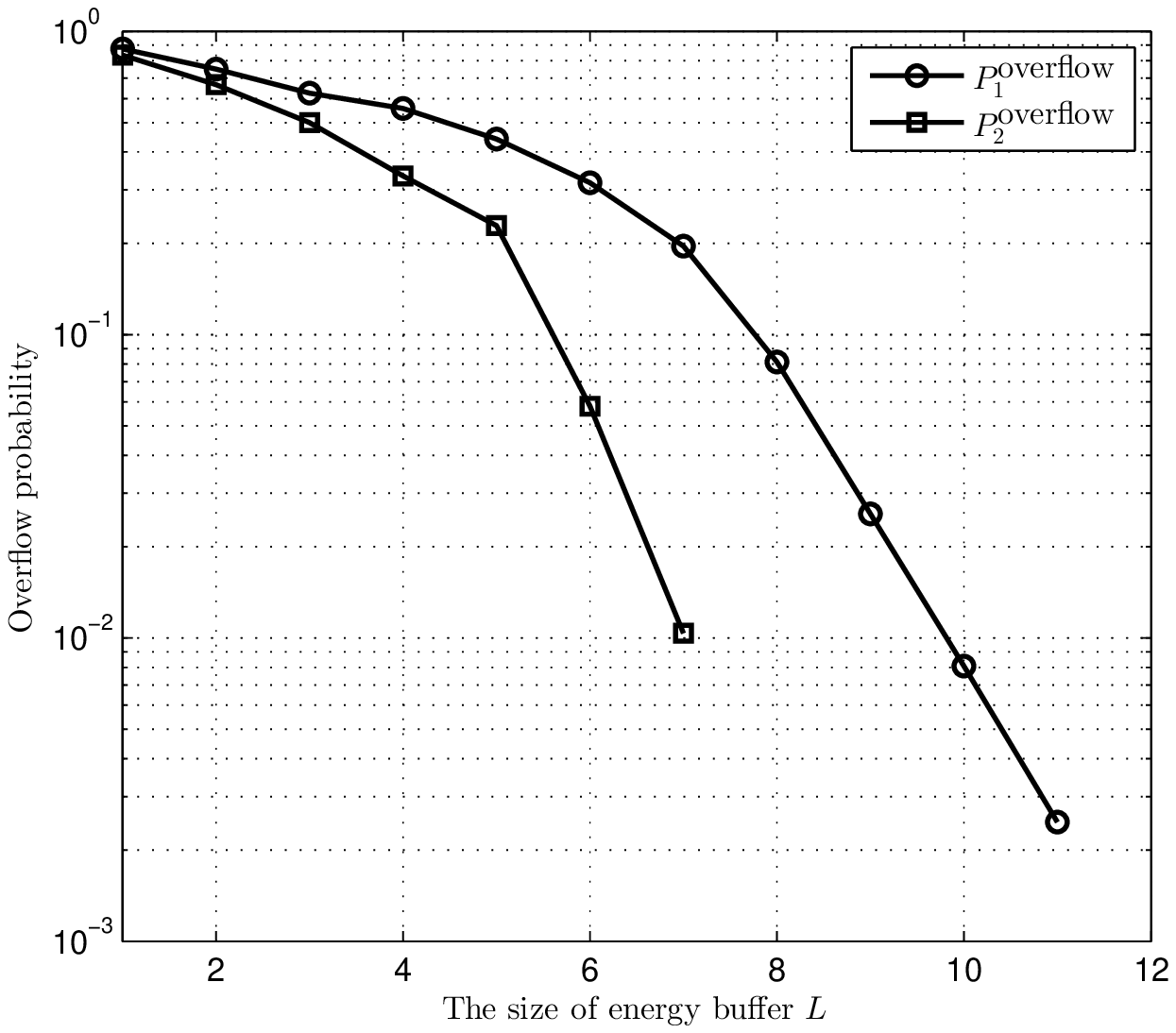}
\caption{The overflow probability of the energy buffer of the two nodes, where $L_1=L_2=L$, $e_{1,\max}=8$, and $e_{2,\max}=6$.} \label{fig:overflow}
\end{figure}

In Fig. \ref{fig:overflow}, we investigate the overflow probability of the energy buffers.
    We set the buffer size to $L_1=L_2$ and the maximum harvested energy in a slot to $e_{1,\max}=8$ and $e_{2,\max}=6$.
Under the optimal online power control obtained using Algorithm \ref{alg:online}, it can be seen that the overflow probability of both energy buffers decreases rapidly and goes to zero when $L>7$ and $L>11$, respectively.
    This is also in line with our result on the stationary distribution of the remaining energy in the energy buffers (see \eqref{rt:pi_stationary} and Fig. \ref{fig:pi_star}), namely, the probability that either of the energy buffers has much energy is nearly zero.
Therefore, we do not need very large energy buffers in real systems and thus our assumption that the energy buffer is large enough is reasonable.

\section{Conclusion}\label{sec6:conclusion}
In this paper, we have studied the optimal offline and online power control policies to minimize the weighted-sum distortion in transmitting correlated sources under energy harvesting constraints.
    We have shown that, while the offline power control outperforms both the online power control and the greedy power control owing to the non-causal information about the energy harvesting process, our online power control performs very close to the offline power control by exploring the statistics of the energy harvesting process.
    In addition, our analytic framework of cost functions for the online power control can also be applied to other networks.
 {We also have observed that when the correlation between the two sources becomes stronger, the sources would be more compressible and thus smaller distortion can be achieved.
    However, the structure of the optimal power control remains unchanged, even for extreme cases such as $\eta=1$.
Moreover, our results have validated the assumption that the energy buffer at each node is large enough so that the probability of energy overflow would be negligible.}
     {To be specific, under the optimal power control, the probability that the energy buffers have much remaining energy is  zero for both  offline and online cases.
Nevertheless, investigating the power control and distortion performance for transmitting correlated sources using very small energy buffers (e.g., unit-sized battery} \cite{Ulukus-unitsize} {) is also a very interesting problem and will be considered in our future work. }


\appendices
\renewcommand{\theequation}{\thesection.\arabic{equation}}
\newcounter{mytempthcnt}
\setcounter{mytempthcnt}{\value{theorem}}
\setcounter{theorem}{2}

\section{Proof of Proposition \ref{lem:covex1}} \label{prf:lem1}

\begin{proof}
From Fig. \ref{fig:d1d2}, it is clear that the minimum weighted-sum distortion $D(\boldsymbol{r})$ occurs at some point on  curve segment CD or its two end points (C and D).
    Since it is assumed that $w_1<w_2$, we are focused on curve MD and point D.

Since the coordinate of point D is $\big(\frac{d_{12}^{\min}}{d_2^{\min}},d_2^{\min}\big)$ and curve segment MD is written as  $D_2= \frac{d_{12}^{\min}}{D_1}$, the derivative at point D is given by
\begin{equation*}
  \frac{\textrm{d} D_2}{D_1}=-\frac{d_{12}^{\min}}{D_1^2} = -\frac{(d_{2}^{\min})^2}{d_{12}^{\min}}.
\end{equation*}

    Let $\kappa=-\frac{w_1}{w_2}$ be the slope of line $w_1D_1+w_2D_2=c_0$.
If $\frac{\textrm{d} D_2}{D_1}>\kappa$, then the minimum of $D(\boldsymbol{r})$ occurs at point D, i.e., $D^{\textrm{D}}({\boldsymbol{r}}) = w_1 \frac{d_{12}^{\min}} {d_{2}^{\min}} + w_2 d_2^{\min} $.
    By solving  $\frac{\textrm{d} D_2}{D_1}>\kappa$, we have
\begin{equation*}
  r_2< g(r_1)= -\frac12 \log_2 \frac{w_1\bar{\rho}2^{-2r_1}} {w_2(\bar{\rho}+\rho2^{-2r_1})^2-w_1\rho 2^{-4r_1}}.
\end{equation*}

If $r_2> g(r_1)$, the minimum of  $D(\boldsymbol{r})$  occurs at some point on curve segment MD, where the slope is exactly $\kappa$.
    Solving $D_1$ from $-\frac{d_{12}^{\min}}{D_1^2}=-\frac{w_1}{w_2}$, we have $D_1=\sqrt{\frac{w_2}{w_1}d_{12}^{\min}}$.
Together with $D_1D_2=d_{12}^{\min}$, we finally obtain  $D_2=\sqrt{\frac{w_1}{w_2}d_{12}^{\min}}$ and $D^{\textrm{MD}}({\boldsymbol{r}}) = 2\sqrt{w_1w_2 d_{12}^{\min}}$.
    Thus,
    \begin{equation*}\label{rt:D_r_apx}
      D(\boldsymbol{r}) =\left\{
            \begin{aligned}
                    &D^{\textrm{MD}}({\boldsymbol{r}})  & \textrm{if}~ r_2\geq g(r_1),\\
                    & D^{\textrm{D}}({\boldsymbol{r}})  & \textrm{if}~r_2< g(r_1).
            \end{aligned}
            \right.
    \end{equation*}

To prove the convexity of $D(\boldsymbol{r})$, we first investigate the difference between $D^{\textrm{D}}({\boldsymbol{r}})$ and $D^{\textrm{MD}}({\boldsymbol{r}}) $,
    \begin{equation*}
    \begin{split}
        D^{\textrm{D}}({\boldsymbol{r}})-D^{\textrm{MD}}({\boldsymbol{r}}) &= w_1 \frac{d_{12}^{\min}}{d_{2}^{\min}} + w_2 d_2^{\min} - 2\sqrt{w_1w_2 d_{12}^{\min}} \\
        & = \left(\sqrt{w_1 \frac{d_{12}^{\min}}{d_{2}^{\min}}}- \sqrt{w_2 d_2^{\min}} \right)^2 \geq 0,
      \end{split}
    \end{equation*}
   where the equality holds if $r_2=g(r_1)$.

This means that the surface of $D^{\textrm{D}}({\boldsymbol{r}})$ intersects the surface of $D^{\textrm{MD}}({\boldsymbol{r}}) $ only on one curve.
    By evaluating their first and second order derivatives, one can show that both $D^{\textrm{D}}({\boldsymbol{r}})$ and $D^{\textrm{MD}}({\boldsymbol{r}}) $ are decreasing and convex in ${\boldsymbol{r}}$.
Therefore, the surface of $D^{\textrm{D}}({\boldsymbol{r}})$ is tangent with that of $D^{\textrm{MD}}({\boldsymbol{r}}) $, which implies
$D({\boldsymbol{r}})$ is also decreasing and convex in coding rate $\boldsymbol{r}$.
        Since $r_k=\frac12 \log_2(1+h_kp_k)$ is concave in $p_k$,
    we know that $D(\boldsymbol{p})$ is convex in $\boldsymbol{p}$ \cite{cv_byod-2004}.
\end{proof}

\section{Proof of Proposition \ref{prop:lambda0} and \ref{prop:lambda2}}
\subsection{Proof of Proposition \ref{prop:lambda0}} \label{prf:prop1}
\begin{proof}
The complimentary slackness conditions associated with $(\mathrm{P}_3)$ are as follows,
 \begin{eqnarray}    
    \label{slk:lambda1}
\lambda_{1\tau}( -D_{1\tau}+(\bar{\eta}+\eta2^{-2r_{2\tau}})2^{-2r_{1\tau}} ) \hspace{-2mm}&=&\hspace{-2mm}0,\\
    \label{slk:lambda2}
\lambda_{2\tau}(-D_{2\tau}+(\bar{\eta}+\eta2^{-2r_{1\tau}})2^{-2r_{2\tau}} )\hspace{-2mm}&=&\hspace{-2mm}0,\\
    \nonumber
\lambda_{3\tau}(-\log_2 D_{1\tau}-\log_2 D_{2\tau} - 2(r_{1\tau}+r_{2\tau}) \hspace{-2mm}& &\hspace{-2mm}\\
    \label{slk:lambda3}
+ \log_2(\bar{\eta}+\eta2^{-2(r_{1\tau}+r_{2\tau})}) )\hspace{-2mm}&=&\hspace{-2mm}0,\\
    \label{slk:lambda4}
\lambda_{4\tau}((D_{1\tau}-\bar{\eta})2^{2r_{2\tau}} -\eta)\hspace{-2mm}&=&\hspace{-2mm}0, \\
    \label{slk:lambda5}
\lambda_{5\tau}( (D_{2\tau}-\bar{\eta})2^{2r_{1\tau}} -\eta) \hspace{-2mm}&=&\hspace{-2mm}0, \\
    \label{slack:cstr_mu}
\mu_{k\tau}\sum_{i=1}^\tau \Big( \frac{1}{h_k}(2^{2r_{ki}}-1) - e_{ki}\Big) \hspace{-2mm}&=&\hspace{-2mm}0,\\
    \label{slack:cstr_theta}
\theta_{k\tau}r_{k\tau} \hspace{-2mm}&=&\hspace{-2mm}0,  \\ \nonumber
\forall~k=1,2,~~~ \tau=1,\cdots,T. \hspace{-3mm}& &\hspace{-2mm}
\end{eqnarray}

    Note that for any given coding rate pair  $(r_{1\tau}, r_{2\tau})$, the minimum weighted-sum distortion $D(\boldsymbol{r})$ occurs at some point on curve segment CD or the two end points (C or D).
        Since it is assumed $w_1<w_2$, we can focus on curve MD and point D.
    Therefore, constraint \eqref{cst:lambda1}, \eqref{cst:lambda4}, and \eqref{cst:lambda5} are never active, while constraint \eqref{cst:lambda3} is always active.
        Using this together with the complementary slackness conditions \eqref{slk:lambda1}, \eqref{slk:lambda3}--\eqref{slk:lambda5}, the proposition is proved.
\end{proof}

\subsection{Proof of Proposition \ref{prop:lambda2}} \label{prf:prop2}
\begin{proof}
Following the same analysis in Appendix \ref{prf:lem1}, the minimum weighted-sum distortion occurs at point $\text{D}$ if $\frac{\textrm{d} D_2}{D_1}<-\frac{w_1}{w_2}$.
In this case, constraint (\ref{cst:lambda2}) is active, which implies $\lambda_{2\tau}>0$.
    By solving $r_2$ from $\frac{\textrm{d} D_2}{D_1}<-\frac{w_1}{w_2}$, we have
\begin{equation*}
  r_{2\tau}< g(r_{1\tau})= -\frac12 \log_2 \frac{w_1\bar{\eta}2^{-2r_{1\tau}}} {w_2(\bar{\eta}+\eta2^{-2r_{1\tau}})^2-w_1\eta 2^{-4r_{1\tau}}}.
\end{equation*}

On the other hand, if $r_2\geq g(r_1)$ is true, the minimum sum distortion occurs at some point on curve segment MD, where the derivative is exactly $-\frac{w_1}{w_2}$.
    Therefore, the constraint (\ref{cst:lambda2}) is not active and we have $\lambda_{2\tau}>0$.
This completes the proof.
\end{proof}

\section{Proof of Theorem \ref{th:p_structure}} \label{prf:thm1}
\begin{proof}
According to  slackness condition (\ref{slack:cstr_mu}), we know that $\mu_{k\tau}>0$ holds  if the energy buffer is emptied (i.e., $\sum_{i=1}^\tau(p_{ki}-e_{ki})=0$) and
    $\mu_{k\tau}=0$ otherwise.
It can be readily seen that $\nu_{k\tau}=1/\sum_{i=\tau}^T \mu_{ki}$ is monotonically increasing with time $\tau$ and does not change until the energy buffer is depleted.
    Also note that a node will not deplete its energy buffer unless it harvests more energy on average in the following slots than in previous slots (otherwise, some energy should be saved in the buffer for the following slots).

    To prove the second part of the theorem, we present the first order derivative of  the Lagrangian \eqref{eq:lagrange} as follows.
In particular, we have $\lambda_{1\tau}=0, \lambda_{4\tau}=0$, and $\lambda_{5\tau}=0$ for all $k$ and $\tau$ by Proposition \ref{prop:lambda0}--\ref{prop:lambda2}.
\begin{eqnarray}
\label{eq:lag_dif_d1}
   \frac{\partial \mathcal{L}}{\partial D_{1\tau}}\hspace{-2.75mm} &=& \hspace{-2.75mm} w_1 - \frac{\lambda_{3\tau}} {D_{1\tau}\ln 2}, \\
\label{eq:lag_dif_d2}
  \frac{\partial \mathcal{L}}{\partial D_{2\tau}}\hspace{-2.75mm} &=& \hspace{-2.75mm} w_2-\lambda_{2\tau} - \frac{\lambda_{3\tau}} {D_{2\tau}\ln 2}, \\
\label{eq:lag_dif_r1} \nonumber
   \frac{\partial \mathcal{L}}{\partial r_{1\tau}} \hspace{-2.75mm} &=& \hspace{-2.75mm}
          - \lambda_{2\tau}\eta 2^{-2(r_{1\tau}+r_{2\tau})} - \frac{\lambda_{3\tau}\eta2^{-2(r_{1\tau}+r_{2\tau})}} {\bar{\eta}+\eta 2^{-2(r_{1\tau}+r_{2\tau})}}  \\
    \hspace{-2.75mm} & & \hspace{-2.75mm}
                        -\frac{\lambda_{3\tau}}{\ln 2} + \frac{2^{2r_{1\tau}}}{h_1\nu_{1\tau}}  - \frac{\theta_{1\tau}}{2\ln 2},\\
\label{eq:lag_dif_r2}  \nonumber
   \frac{\partial \mathcal{L}}{\partial r_{2\tau}}\hspace{-2.75mm} &=& \hspace{-2.75mm}
           - \lambda_{2\tau}\bar{\eta} 2^{-2r_{2\tau}} \hspace{-.3mm} - \hspace{-.3mm} \lambda_{2\tau}\eta 2^{-2(r_{1\tau}+r_{2\tau})}
            \hspace{-.3mm} -\hspace{-.3mm}\frac{\lambda_{3\tau}}{\ln 2} \hspace{-.3mm}  \\
\label{eq:lag_dif_r2}
    \hspace{-2.75mm} & & \hspace{-2.75mm}
            - \hspace{-.3mm} \frac{\lambda_{3\tau}\eta2^{-2(r_{1\tau}+r_{2\tau})}}
            {\bar{\eta}+\eta 2^{-2(r_{1\tau}+r_{2\tau})}} \hspace{-.3mm} +\hspace{-.3mm}
          \frac{2^{2r_{2\tau}}}{h_2\nu_{2\tau}}  \hspace{-.3mm} - \hspace{-.3mm} \frac{\theta_{2\tau}}{2\ln 2}.
\end{eqnarray}

We first consider the case of $\lambda_{2\tau}=0$, i.e., the minimum distortion occurs on curve segment MD.
In this case, the distortion of the nodes is given by $D_1=\sqrt{\frac{w_2}{w_1}d_{12}^{\min}}$ and $D_2=\sqrt{\frac{w_1}{w_2}d_{12}^{\min}}$, respectively.
    By setting the derivatives (\ref{eq:lag_dif_d1}) to zero and solving for $\lambda_{3\tau}$, we have $\lambda_{3\tau}=\ln 2\sqrt{w_1w_2(\bar{\eta}+\eta xy)xy}$.
By setting the derivatives in (\ref{eq:lag_dif_r1}) and (\ref{eq:lag_dif_r2})  to be zero, we have
    \begin{eqnarray*}
      F_{11}(x,y,\nu_{1\tau},\nu_{2\tau}) \hspace{-2.75mm} &=& \hspace{-2.75mm} \frac{\lambda_{3\tau}\eta xy}{\bar{\eta}+\eta xy}
                    + \frac{\lambda_{3\tau}}{\ln 2} - \frac{1}{h_1\nu_{1\tau}}\frac{1}{x} + \frac{\theta_{1\tau}}{2\ln 2}=0,\\
      F_{12}(x,y,\nu_{1\tau},\nu_{2\tau})  \hspace{-2.75mm} &=& \hspace{-2.75mm} \frac{1}{h_2\nu_{2\tau}}\frac{1}{y} -\frac{1}{h_1\nu_{1\tau}}\frac{1}{x} + \frac{\theta_{1\tau}-\theta_{2\tau}}{2\ln 2}=0.
    \end{eqnarray*}
Taking the derivative $F_{11}$ with respect to $x$ and $\nu_{1\tau}$, we have
\begin{eqnarray*}
  \frac{\partial F_{11}}{\partial x} \hspace{-2.75mm} &=& \hspace{-2.75mm} \ln2 \sqrt{w_1w_2} \frac{ 3\bar{\eta}\sqrt{xy}+3\eta(xy)^{\frac32} -\eta x^{\frac32}y^{\frac52} }{2(\bar{\eta}+\eta xy)^{\frac32}} \\
  \hspace{-2.75mm} && \hspace{-2.75mm}
  +\frac{(\bar{\eta}+2\eta xy)y}{2\sqrt{(\bar{\eta}+\eta xy)xy}} +\frac{1}{h_1\nu_{1\tau}}\frac{1}{x^2} >0,\\
  \frac{\partial F_{11}}{\partial \nu_{1\tau}}
  \hspace{-2.75mm} &=& \hspace{-2.75mm}  \frac{1}{h_1x\nu_{1\tau}^2}>0.
\end{eqnarray*}
Hence,
\begin{equation*}
    \frac{\partial x}{\partial\nu_{1\tau}} = -\frac{\frac{\partial F_{11}}{\partial x} }{\frac{\partial F_{11}}{\partial \nu_{1\tau}}}  <0,
\end{equation*}
which implies that $x$ is decreasing with $\nu_{1\tau}$.

Likewise, we have
\begin{equation*}
\begin{split}
   \frac{\partial F_{22}}{\partial x} =& \frac{1}{h_1 x^2\nu_{1\tau}},
   \frac{\partial F_{22}}{\partial y} = \frac{-1}{h_2 y^2\nu_{2\tau}},\\
   \frac{\partial F_{22}}{\partial \nu_{2\tau}} =& \frac{-1}{h_2y\nu_{2\tau}^2},
   \frac{\partial F_{22}}{\partial \nu_{1\tau}} = \frac{1}{h_1x\nu_{1\tau}^2},
   \end{split}
\end{equation*}
and
\begin{equation*}
  \frac{\partial y}{\partial\nu_{2\tau}}<0, \frac{\partial x}{\partial\nu_{2\tau}}>0,   \frac{\partial y}{\partial\nu_{1\tau}}>0.
\end{equation*}
Therefore, we know that $y$ is decreasing with $\nu_{2\tau}$ and increasing with  $\nu_{1\tau}$ while $x$ is increasing with  $\nu_{2\tau}$.

Second, for the case $\lambda_{2\tau}>0$, i.e., the minimum distortion occurs at point D, we have $D_{1\tau}=\frac{d_{12}^{\min}} {d_2^{\min}}$ and $D_{2\tau}=d_2^{\min}$.
   Setting the partial derivatives in  (\ref{eq:lag_dif_d1}) and (\ref{eq:lag_dif_d2}) to zero yields
\begin{eqnarray*}
  \lambda_{3\tau} \hspace{-2.75mm} &=& \hspace{-2.75mm} w_1 \ln2 \frac{(\bar{\eta}+\eta xy)x}{\bar{\eta}+\eta x}, \\
  \lambda_{2\tau} \hspace{-2.75mm} &=& \hspace{-2.75mm} w_2 -  \frac{w_1(\bar{\eta}+\eta xy)x}{(\bar{\eta}+\eta x)^2y}.
\end{eqnarray*}

Following a similar argument, we have
    \begin{eqnarray*}
      F_{21}(x,y,\nu_{1\tau},\nu_{2\tau}) \hspace{-3.25mm} &=& \hspace{-3.25mm} \lambda_{2\tau}\eta xy
                       + \frac{\lambda_{3\tau}\eta xy}{\bar{\eta}+\eta xy}
                       + \frac{\lambda_{3\tau}}{\ln 2}  \\
                    \hspace{-3.25mm} && \hspace{-3.25mm}
                    - \frac{1}{h_1\nu_{1\tau}}\frac{1}{x} + \frac{\theta_{1\tau}}{2\ln 2} =0,\\
      F_{22}(x,y,\nu_{1\tau},\nu_{2\tau})  \hspace{-3.25mm} &=& \hspace{-3.25mm}
                       - \lambda_{2\tau}\eta y  +  \frac{\theta_{1\tau} - \theta_{2\tau}}{2\ln 2}  \\
                    \hspace{-3.25mm} && \hspace{-3.25mm}
                    - \frac{1}{h_2\nu_{2\tau}}\frac{1}{y}  - \frac{1}{h_1\nu_{1\tau}}\frac{1}{x}
                     =  0.
    \end{eqnarray*}
Hence,
\begin{equation*}
  \frac{\partial x}{\partial\nu_{1\tau}}<0,~~ \frac{\partial y}{\partial\nu_{2\tau}}<0,~~ \frac{\partial x}{\partial\nu_{2\tau}}>0, ~~  \frac{\partial y}{\partial\nu_{1\tau}}>0,
\end{equation*}
i.e., $x$ is increasing with  $\nu_{2\tau}$ and decreasing with $\nu_{1\tau}$, while $y$ is decreasing with $\nu_{2\tau}$ and increasing with  $\nu_{1\tau}$.

Note that $p_{1\tau}=\frac{1}{h_1}(\frac{1}{x}-1)$ and $p_{2\tau}=\frac{1}{h_2}(\frac{1}{y}-1)$ are decreasing with respect to $x$ and $y$, respectively.
    Thus, we know that $p_{k\tau}$ is increasing with $\nu_{k\tau}$ and decreasing with $\nu_{\tilde{k}\tau}$.
Moreover, $\nu_{k\tau}$ would not be changed unless the energy buffer of node $k$ is depleted.
    This implies that $p_{k\tau}$ should be increased if its energy buffer is depleted and should be decreased if the energy buffer of the other node is depleted.
This completes the proof.
\end{proof}

\section{Proof of Theorem \ref{th:3cost_equal}} \label{prf:3cost_equal}

\begin{proof}
    For any policy $\varrho$, the following equality holds true:
    \begin{equation}\label{eq:avg_cdt0}
            \mathbb{E}\left( \sum_{\tau=1}^{T} \big( \alpha v_{\varrho_\tau}(S_\tau)  -
                                           \alpha \mathbb{E}(v_{\varrho_\tau}(S_\tau)|S_{\tau-1})   \big) \right) =0,
    \end{equation}
    which follows from the equality $\mathbb{E}(\mathbb{E}(X|Y))=\mathbb{E}(X)$.

    For any $\tau\geq1$, we further have
    \begin{eqnarray}
        \nonumber
           \hspace{-3mm} &&\hspace{-3mm}   \alpha \mathbb{E}(v_{\varrho_\tau}(S_\tau)|S_{\tau-1}=s) = \sum_{t=1}^{L}
                                                                        \alpha q_{st} v_{\varrho_{\tau}}(t) \\
        \nonumber
          \hspace{-3mm} &&\hspace{-3mm}  =\bar\alpha d_{\rho_{\tau-1}}(s) + \sum_{t=1}^{L}
                                                                        q_{st} v_{\varrho_{\tau}}(t)  - \bar\alpha {d}_{\rho_{\tau-1}}(s)\\
        \label{dr:prop3_1}
          \hspace{-3mm} &&\hspace{-3mm}  =v_{\varrho_{\tau-1}}(s)  - \bar\alpha d_{\rho_{\tau-1}}(s) ,
    \end{eqnarray}
    where $q_{st}$ is the transfer probability from state $s$ to state $t$, i.e., the $t$-th element in the $s$-th row of $\textbf{P}_{\rho_{\tau-1}}$.

    Substituting $\alpha \mathbb{E}(v_{\varrho_\tau}(X_\tau)|X_{\tau-1}=s)$ in (\ref{eq:avg_cdt0}) with (\ref{dr:prop3_1}), we have
\begin{eqnarray} \label{dr:prop1_2}
\nonumber
\hspace{-9mm} &&\hspace{-3mm} \sum_{\tau=1}^{T} \big( \alpha v_{\varrho_\tau}(S_\tau)  -
                                           \alpha \mathbb{E}(v_{\varrho_\tau}(S_\tau)|S_{\tau-1})   \big) \\
\nonumber
\hspace{-9mm} &&\hspace{-3mm} =\sum_{\tau=1}^{T} \big( \alpha v_{\varrho_\tau}(S_\tau)
                                            - v_{\varrho_{\tau-1}}(S_{\tau-1})  + \bar\alpha d_{\rho_{\tau-1}}(S_{\tau-1}) \big) \\
\nonumber
\hspace{-9mm} &&\hspace{-3mm} =-\sum_{\tau=1}^{T-1} \bar\alpha \big(v_{\varrho_\tau}(S_\tau) -  d_{\rho_{\tau-1}}(S_{\tau-1}) \big) \\
\label{dr:prop3_2}
\hspace{-9mm} &&\hspace{-3mm} ~~~~  + \alpha v_{\varrho_{T}}(S_{T}) - v_{\varrho_{0}}(S_{0}) + \bar\alpha d_{\rho_{0}}(S_{0}).
\end{eqnarray}

Under the same  control function $\rho_0$, we have $\mathbb{E}(v_{\varrho_{0}}(S_{0})) = \mathbb{E}(v_{\varrho_{0}}(S_{T}))$ and $\mathbb{E}(d_{\rho_{0}}(S_{0})) = \mathbb{E}(d_{\rho_{T}}(S_{T}))$.
    Thus, by taking the expectation on both sides of \eqref{dr:prop3_2} and applying \eqref{eq:avg_cdt0}, we have
\begin{eqnarray}
\nonumber
0 \hspace{-3mm} &=&\hspace{-3mm} \mathbb{E}\left( \sum_{\tau=1}^{T} \big( \alpha v_{\varrho_\tau}(S_\tau)  -
                                           \alpha \mathbb{E}(v_{\varrho_\tau}(S_\tau)|S_{\tau-1})   \big) \right)\\
\label{dr:prop3_3}
  \hspace{-3mm} &=&\hspace{-3mm} \sum_{\tau=1}^{T} \bar\alpha \big(v_{\varrho_\tau}(S_\tau) -  d_{\rho_{\tau-1}}(S_{\tau-1}) \big).
\end{eqnarray}

Theorem \ref{th:3cost_equal} is hence proved by dividing $T$ on both sides of \eqref{dr:prop3_3} and letting $T$ go to infinity.
\end{proof}

\section{Proof  of Theorem \ref{prop:contract4} } \label{prf:contract4}
\begin{proof}
    \begin{eqnarray*}
        \mathbb{T}(\boldsymbol{u}) \hspace{-3mm} &=&\hspace{-3mm}  \min_{\rho}\left\{\bar\alpha \boldsymbol{d}_\rho + \alpha \textbf{P}_\rho \boldsymbol{u}\right\}\\
        \hspace{-3mm} &=&\hspace{-3mm}  \min_{\rho}\left\{\bar\alpha \boldsymbol{d}_\rho + \alpha \textbf{P}_\rho \boldsymbol{v} + \alpha \textbf{P}_\rho (\boldsymbol{u}-\boldsymbol{v})\right\}\\
        \hspace{-3mm} &\leq&\hspace{-3mm}  \min_{\rho}\left\{\bar\alpha \boldsymbol{d}_\rho + \alpha \textbf{P}_\rho \boldsymbol{v} + \alpha \textbf{P}_\rho \Vert\boldsymbol{u}-\boldsymbol{v}\Vert_\infty \textbf{1}\right\}\\
        \hspace{-3mm} &\leq&\hspace{-3mm}   \mathbb{T}(\boldsymbol{v})  + \alpha \Vert\boldsymbol{u}-\boldsymbol{v}\Vert_\infty \textbf{1},
    \end{eqnarray*}
    where $ \textbf{1}$ is an $L$-dimensional vector of ones.

    Likewise, we can also show $\mathbb{T}(\boldsymbol{v}) \leq  \mathbb{T}(\boldsymbol{u})  + \alpha \Vert\boldsymbol{u}-\boldsymbol{v}\Vert_\infty \textbf{1}$, which means
     $\Vert  \mathbb{T}(\boldsymbol{u})-  \mathbb{T}(\boldsymbol{v})\Vert_\infty \leq\alpha \Vert \boldsymbol{u}-\boldsymbol{v}\Vert_\infty$ and thus proves the theorem.
\end{proof}

\section{Proof of Theorem \ref{th:main4}}   \label{prf:main4}

\begin{proof}
To facilitate the proof, for a given control policy $\rho$, we define an updating rule from $\mathbb{R}_{++}^{L}$ to $\mathbb{R}_{++}^{L}$:
\begin{equation*}\label{df:thm_T}
    \mathbb{T}_\rho(\boldsymbol{v}) = \bar\alpha \boldsymbol{d}_\rho + \alpha \textbf{P}_\rho \boldsymbol{v}.
\end{equation*}

Given a power control policy $\varrho=\{\rho_1,\cdots,\rho_T\}$ and an positive initial cost vector $\boldsymbol{v}_0$, by updating $\boldsymbol{v}_0$ with $\rho_\tau (\tau=1,\cdots,T)$ sequentially, we have the following property of $\mathbb{T}$
\begin{eqnarray}
\nonumber
    \hspace{-8mm}&&\hspace{-3mm} \lim_{T\rightarrow\infty} \mathbb{T}_{\rho_1}\mathbb{T}_{\rho_2}\cdots \mathbb{T}_{\rho_T} (\boldsymbol{v}_{0})  \\
\nonumber
    \hspace{-8mm}&&\hspace{-3mm} =\lim_{T\rightarrow\infty} \mathbb{T}_{\rho_2}\cdots \mathbb{T}_{\rho_T}
                                                                    (\bar\alpha \boldsymbol{d}_{\rho_1} +\alpha \textbf{P}_{\rho_1} \boldsymbol{v}_{0} ) \\
    \nonumber
    \hspace{-8mm} && \hspace{-3mm}=\lim_{T\rightarrow\infty} \sum_{\tau=1}^T \bar\alpha \alpha^\tau
                                                                          \textbf{P}_{\rho_1} \cdots\textbf{P}_{\rho_{\tau-1}}\boldsymbol{d}_{\rho_{\tau}}
                                                                            +\alpha^T \textbf{P}_{\rho_{T+1}}\boldsymbol{v}_{0} \\
\label{eq:eqal_vpi}
    \hspace{-8mm} && \hspace{-3mm}= \boldsymbol{v}_{\varrho},
\end{eqnarray}
where (\ref{eq:eqal_vpi}) follows from the definition of $\boldsymbol{v}_{\varrho}$ (see (\ref{df:v_pi_0})) and the fact $0\leq\alpha\leq1$.

Denote the fixed point of $\mathbb{T}(\boldsymbol{v})$ as $\tilde{\boldsymbol{v}}$, we will prove $\tilde{\boldsymbol{v}} = \boldsymbol{v}^*$ in the following subsections.
    Fist, we will prove $\boldsymbol{v}^*\leq \tilde{\boldsymbol{v}}$.

\subsection{ $\boldsymbol{v}^*\leq \tilde{\boldsymbol{v}}$ }

Given $\tilde{\boldsymbol{v}}$, we can find the control function $\rho$ minimizing $\mathbb{T}(\tilde{\boldsymbol{v}})$ by solving (\ref{eq:T_explains}).
    Denote $\varrho'=\{\rho,\rho,\cdots\}$ as a stationary power control policy.
Start from an initial cost vector  $\boldsymbol{v}_0$, we  apply control policy $\varrho$ to  $\boldsymbol{v}_0$ (equivalent to apply $\mathbb{T}(\boldsymbol{v})$) for infinite times.
    According to \eqref{eq:eqal_vpi}, we have
    \begin{equation*}
        \boldsymbol{v}_{\varrho'} = \lim_{T\rightarrow\infty} \mathbb{T}^T (\boldsymbol{v}_{0}).
    \end{equation*}

Since $\mathbb{T}(\boldsymbol{v})$ is a contraction mapping, we know that $\lim_{T\rightarrow\infty} \mathbb{T}^T (\boldsymbol{v}_{0})$ converges to its corresponding fixed point with geometric speed.
    Thus, we have $\boldsymbol{v}_{\varrho'}=\tilde{\boldsymbol{v}}$.

By the definition of $\boldsymbol{v}^*$ (see (\ref{df:v_opt})), we have
\begin{equation}\label{dr:them1_1}
    \boldsymbol{v}^*= \inf_{\varrho} \boldsymbol{v}_{\varrho} \leq \boldsymbol{v}_{\varrho'}=\tilde{\boldsymbol{v}}.
\end{equation}

\subsection{ $\boldsymbol{v}^*\geq \tilde{\boldsymbol{v}}$ }

Let $\varrho=\{\rho_1,\rho_2,\cdots\}$ be the optimal policy achieving $\boldsymbol{v}^*$.
    By the definition of $\mathbb{T}(\boldsymbol{v})$ (see (\ref{df:T})), the following inequalities hold true for any positive $\boldsymbol{v}_0$,
\begin{eqnarray}\label{eq:TandT_rho}
\nonumber
    \mathbb{T}(\boldsymbol{v}_{0}) \hspace{-3mm} &\leq& \hspace{-3mm}  \mathbb{T}_{\rho_1}(\boldsymbol{v}_{0}), \\
\nonumber
    \mathbb{T}^2(\boldsymbol{v}_{0}) \hspace{-3mm} &\leq& \hspace{-3mm} \mathbb{T}(\mathbb{T}_{\rho_1}(\boldsymbol{v}_{0})) \leq \mathbb{T}_{\rho_2}\mathbb{T}_{\rho_1}(\boldsymbol{v}_{0}), \\
    \label{dr:vtildgeqvstar}
    \mathbb{T}^T(\boldsymbol{v}_{0}) \hspace{-3mm} &\leq &\hspace{-3mm} \mathbb{T}_{\rho_T}\cdots \mathbb{T}_{\rho_1}(\boldsymbol{v}_{0}).
\end{eqnarray}

As $T$ goes to infinity, the left-hand side and the right-hand side of (\ref{dr:vtildgeqvstar}) reduce to $\tilde{\boldsymbol{v}}$ and $\boldsymbol{v}^*$, respectively.
    Thus, we have
\begin{equation} \label{dr:them1_2}
  \tilde{\boldsymbol{v}}\leq\boldsymbol{v}^*.
\end{equation}

By combining \eqref{dr:them1_1} and \eqref{dr:them1_2}, we have
\begin{equation*}
  \tilde{\boldsymbol{v}}=\boldsymbol{v}^*.
\end{equation*}
That is, $\boldsymbol{v}^*$ is the fixed point of $\mathbb{T}(\boldsymbol{v})$.
    This completes the proof of the theorem.
\end{proof}


\small{
\bibliographystyle{IEEEtran}

\begin{thebibliography}{11}

\bibitem{Ulukus-2015-review}
S. Ulukus, A. Yener, E. Erkip, O. Simeone, M. Zorzi, P. Grover,  and K. Huang, ``Energy harvesting wireless communications: a review of recent advances,"  \textit{IEEE J. Sel. Areas Commun.,} vol. 33, no. 3, pp. 360--381, Mar. 2015.

\bibitem{WSN-2011-survey}
S. Sudevalayam and P. Kulkarni, ``Energy harvesting sensor nodes: survey and implications," \textit{IEEE Commun. Surveys Tuts.}, vol. 13, no.
3, pp. 443--461, Mar. 2011.


\bibitem{Ulukus-2011-policy}
O. Ozel, K. Tutuncuoglu, J. Yang, S. Ulukus, and A. Yener,  ``Transmission with energy harvesting nodes in fading wireless
channels: Optimal policies," \textit{IEEE J. Sel. Areas Commun.}, vol. 29, no. 8, pp. 1732--1743, Sep. 2011.
\bibitem{Ulukus-2012-packet_mac}
 J. Yang and S. Ulukus, ``Optimal packet scheduling in a multiple access channel with energy harvesting transmitters," \textit{J. Commun. Netw.},  vol. 14, no. 4, pp. 140--150, Apr. 2012.
\bibitem{RuiZhang-2012-tsp}
C. K. Ho and R. Zhang, ``Optimal energy allocation for wireless communications with energy harvesting constraints," \textit{IEEE Trans. Signal Process.}, vol. 60, no. 9, pp. 4808--4818, Sep. 2012.


\bibitem{Ydong-2015-JSAC}
Y. Dong, F. Farnia and A. Ozgur, ``Near optimal energy control and approximate capacity of energy harvesting communication," \textit{IEEE J. Sel. Areas Commun.,} vol. 33, no. 3, pp. 540--557, Mar. 2015.

\bibitem{Zeng-2015-TWC}
W. Zeng, Y. R. Zheng, and R. Schober, ``Online resource allocation for energy harvesting downlink multiuser systems: precoding with modulation, coding rate, and subchannel selection," \textit{IEEE Trans. Wireless Commun.}, vol. 14, no. 10, pp. 5780--5794, Oct. 2015.

\bibitem{Badiei-2014-TIT}
M. B. Khuzani and P. Mitran, ``On online energy harvesting in multiple access communication systems," \textit{IEEE Trans. Inform. Theory}, vol. 60, no. 3, pp. 1883--1898, Mar. 2014.

\bibitem{YDong-2016-JSAC}
Y. Dong, J. Wang, B. Shim, and D.  I.  Kim, ``DEARER: A distance-and-energy-aware routing with energy reservation for energy harvesting wireless sensor networks," \textit{IEEE J. Sel. Areas Commun.,} vol. 34, no. 12, pp. 3798--3813,  Dec.  2016.

\bibitem{Sakulkar-2016-Arxiv}
P. Sakulkar and B. Krishnamachari, ``Online learning schemes for power allocation in energy harvesting communications," [Online]. Available: arXiv:1607.02552v2.

\bibitem{Dongin-TWC-2016}
K. W.  Choi and D. I. Kim, ``Stochastic optimal control for wireless powered communication networks,"
\textit{IEEE Trans. Wireless Commun.}, vol. 15, no. 1, pp. 686--698, Jan. 2016.

\bibitem{ZChen-2016-JSAC}
Z. Chen, Y. Dong, P. Fan, and K. B. Letaief, ``Optimal throughput for two-way relaying: energy harvesting and energy co-operation,"  \textit{IEEE J. Sel. Areas Commun.}, vol. 34, no. 5, pp. 1448--1462, May 2016.

\bibitem{Cui-2014-TWC}
C. Huang, R. Zhang, and S. Cui, ``Optimal power allocation for outage probability minimization in fading channels with energy harvesting constraints,"
\textit{IEEE Trans. Wireless Commun.}, vol. 13, no. 2, pp. 1074--1087, Feb. 2014.

\bibitem{Yzhao-2015-TWC}
Y. Zhao, B. Chen, and R. Zhang, ``Optimal power management for remote estimation with an energy harvesting sensor," \textit{IEEE Trans. Wireless Commun.}, vol. 14, no. 11, pp. 6471--6480, Nov. 2015.

\bibitem{Ozcelik-2016-ISIT}
A. \"{O}z\c{c}elikkale, T. McKelvey, and M. Viberg, ``Performance bounds for remote estimation with an energy harvesting sensor," in \textit{Proc. IEEE Int. Symp. Inf. Theory (ISIT)}, Barcelona, Spain, July 2016, pp. 460--464.

\bibitem{Bhat-2016-ICC}
R. V. Bhat, M. Motani, and T. J. Lim, ``Distortion minimization in energy harvesting sensor nodes with compression power constraints,"  in \textit{Proc. IEEE Int. Conf. Commun. (ICC)}, Kuala Lumpur, Malaysia, May 2016, pp. 1--6.

\bibitem{Nourian-2015-JSAC}
M. Nourian, S. Dey, and A. Ahl¨¦n, ``Distortion minimization in multi-sensor estimation with energy harvesting,"  \textit{IEEE J. Sel. Areas Commun.}, vol. 33, no. 3, pp. 524--539, Mar. 2015.

\bibitem{Knorn-2015-TSP}
S. Knorn, S. Dey, A. Ahl¨¦n, and D. E. Quevedo, ``Distortion minimization in multi-Sensor estimation using energy harvesting and energy sharing," \textit{IEEE Trans. Signal Process}, vol. 63, no. 11, pp. 2848--2863, Nov. 2015.

\bibitem{NIT-2011}
A. El Gamal and Y. H. Kim, \textit{Network Information Theory}, Cambridge, UK, Cambridge University Press, 2012.

\bibitem{Deniz-2015}
R. Gangula, D. G\"{u}nd\"{u}z, and D. Gesbert, ``Distributed compression and transmission with energy harvesting sensors,"
in \textit{Proc. IEEE Int. Symp. Inf. Theory} (\textit{ISIT}),  Hong Kong, China, Jun. 2015, pp. 1139--1143.

\bibitem{Dong-2016-ICCC}
Y. Dong, J. Wang, and B. Shim, ``Transmitting correlated sources using energy harvesting transmitters," in \textit{Proc. IEEE/CIC Int. Conf. Commun. China (ICCC)}, Chengdu, China, July 2016, pp. 1--6.

\bibitem{Cui-2007-TSP}
S. Cui, J.-J. Xiao, A. J. Goldsmith, Z.-Q. Luo and H. V. Poor, ``Estimation diversity and energy efficiency in distributed sensing," \textit{IEEE Trans. Signal Process.}, vol. 55, no. 9, pp. 4683--4695, Sep. 2007.

\bibitem{Pturmsn-2014-MDP}
M. L. Puterman, \textit{Markov decision processes: discrete stochastic dynamic programming}, New York, NY, USA, John Wiley \& Sons Inc.,  2014.

\bibitem{button-2015}
R. Ranjusha, et al. ``Fabrication and performance evaluation of button cell supercapacitors based on MnO$_2$ nanowire/carbon nanobead electrodes," \textit{RSC Advances} vol. 38, no.3, pp. 17492--17499, Mar. 2013.


\bibitem{Luo-2007}
J. J. Xiao and Z. Q. Luo, ``Multi-terminal source-channel communication over an orthogonal Multiple-access channel," \textit{IEEE Trans.  Inform. Theory}, vol. 53, no. 9, pp. 3255--3264, Sep. 2007.

\bibitem{cv_byod-2004}
S. Boyd and L. Vandenberghe, \textit{Convex optimization},  New York, NY, USA, Cambridge University Press, 2004.

\bibitem{Banach-1922}
S. Banach, ``Sur les op\'{e}rations dans les ensembles abstraits et leur application aux \'{e}quations int\'{e}grales," \textit{Fund. Math.}, no.3, pp. 133--181, Mar. 1922.

\bibitem{GWF-2006}
O. Kaya and S. Ulukus, ``Achieving the capacity region boundary of fading CDMA channels via generalized iterative waterfilling,"\textit{IEEE Trans. Wireless Commun.}, vol. 5, no. 11, pp. 3215--3223, Nov. 2006.

\bibitem{Fan-SCT}
P. Fan, \textit{Stochastic Processes: Theory and Applications}, Beijing, China, Press of Tsinghua University,  April. 2006.

\bibitem{Ulukus-2012-awgn}
O. Ozel and S. Ulukus, ``Achieving AWGN capacity under stochastic energy harvesting," \textit{IEEE Trans. on Inform. Theory}, vol. 58, no. 10, pp. 6471--6483, Oct. 2012.

\bibitem{Ulukus-unitsize}
K. Tutuncuoglu, O. Ozel, A. Yener, and S. Ulukus, ``The binary energy harvesting channel with a unit-sized battery," \textit{IEEE Trans.  Inform. Theory}, vol. 58, no. 99, pp. 1--1, Apl. 2017.




\end{thebibliography}

}

%
\end{document}